\documentclass[conference,letterpaper]{IEEEtran}

\addtolength{\topmargin}{9mm}

%
%
\usepackage[utf8]{inputenc} 
\usepackage[T1]{fontenc}
\usepackage{url}
\usepackage{ifthen}
\usepackage{cite}
\usepackage[cmex10]{amsmath} 
\usepackage{mathrsfs}

\usepackage[colorlinks,linkcolor=blue,citecolor=blue,urlcolor=blue]{hyperref}

\usepackage{amsmath,amssymb,graphicx,stmaryrd,cleveref,latexsym,fullpage}
\usepackage{amsopn,amsthm,upref}

\usepackage{cite}
\usepackage{url}
\usepackage[caption=false]{subfig}

\usepackage{comment}

\usepackage{tikz}
\usetikzlibrary{cd}


\newcommand{\Iff}{\textbf{if\textcompwordmark f} }
\newcommand{\Tau}{\mathrm{T}}
\newcommand{\assign}{=}

\newcommand{\tmem}[1]{{\em #1\/}}

\newcommand{\tmop}[1]{\ensuremath{\operatorname{#1}}}



\newcommand{\rbr}[1]{\left(#1\right)}

\newcommand{\dtp}[2]{{\langle #1, #2\rangle}}

\let\le\leqslant
\let\ge\geqslant
\let\phi\varphi

\newcommand{\eps}{\varepsilon}

\newcommand{\vv}{\boldsymbol{v}}

\newcommand{\va}{\boldsymbol{a}}

\newcommand{\bz}{\mathbf{0}}
\newcommand{\ch}{\operatorname{char}}

\newcommand{\h}{\eta}

\theoremstyle{plain}
\newtheorem{lemma}{Lemma}

\newtheorem{theorem}{Theorem}

\newtheorem{problem}{Problem}

\theoremstyle{definition}

\newtheorem{definition}{Definition}

\theoremstyle{remark}
\newtheorem{remark}{Remark}

\newtheorem*{remark*}{Remark}

\newcommand{\RR}{\mathbb{R}}

\newcommand{\PP}{\mathbb{P}}

\newcommand{\abs}[1]{\left\lvert #1 \right\rvert}
\newcommand{\norm}[1]{\left\lVert#1\right\rVert}

\newcommand{\T}{\mathrm{T}}
\newcommand{\id}{\mathrm{id}}
\newcommand{\R}{\ensuremath{\mathbb{R}}}

\newcommand{\cA}{\mathcal{A}}
\newcommand{\cB}{\mathcal{B}}
\newcommand{\cC}{\mathcal{C}}

\newcommand{\cQ}{\mathcal{Q}}

\newcommand{\cL}{\mathcal{L}}

\newcommand{\cV}{\mathcal{V}}
\newcommand{\cF}{\mathcal{F}}
\newcommand{\brF}{\bar{\mathcal{F}}}

\newcommand{\brC}{\bar{\mathcal{C}}}

\newcommand{\LP}{\mathrm{LP}}

\newcommand{\LC}{\mathbf{LC}}

\newcommand{\Ab}{\mathbf{Ab}}
\newcommand{\Rng}{\mathbf{Ring}}
\newcommand{\Vc}{\mathbf{Vect}}

\newcommand{\bC}{\mathbf{C}}
\newcommand{\bZ}{\mathbf{Z}}
\newcommand{\bB}{\mathbf{B}}

\newcommand{\cay}{\mathrm{Cay}}

\newcommand{\fF}{\mathfrak{F}}

\newcommand{\Sym}{\mathbf{S}}
\newcommand{\Z}{\mathbb{Z}}

\newcommand{\F}{\mathbb{F}}

\DeclareMathOperator{\sk}{\mathbf{sk}}
\DeclareMathOperator{\rk}{rk}
\DeclareMathOperator{\im}{im}

\DeclareMathAlphabet\mathbfcal{OMS}{cmsy}{b}{n}
\newcommand{\bcF}{\mathbfcal{F}}

\newcommand{\bcL}{\mathbfcal{L}}


\interdisplaylinepenalty=2500 


\begin{document}
\title{Maximally Extendable Sheaf Codes} 

\author{%
  \IEEEauthorblockN{Pavel Panteleev and Gleb Kalachev}
  \IEEEauthorblockA{Department of Mechanics and Mathematics\\
                    Moscow State University\\
                    Moscow, Russia\\
                    Email: \{panteleev,  kalachev\}@intsys.msu.ru}
}


\maketitle


\begin{abstract}

We study sheaf codes, a type of linear codes with a fixed hierarchical collection of local codes, viewed as a sheaf of vector spaces on a finite topological space we call coded space. Many existing codes, such as tensor product codes, Sipser-Spielman codes, and their more recent high-dimensional analogs, can be naturally represented as sheaf codes on simplicial and cubical complexes, considered as coded spaces. We introduce a new property of a sheaf code, called maximal extendibility, which ensures that within a class of codes on the same coded space, we encounter as few obstructions as possible when extending local sections globally. We show that in every class of sheaf codes defined on the same space and parameterized by parity-check matrices with polynomial entries,  there always exists a maximally extendable sheaf code. Such codes are very interesting since it is possible to show that maximally extendable tensor product codes are good coboundary expanders, which potentially could be used to attack the qLTC conjecture.
\end{abstract}

\section{Introduction}

The recent constructions of asymptotically good\footnote{An infinite family of classical or quantum codes is called (\emph{asymptotically}) \emph{good} if both the dimension and the distance of the codes grows as $\Theta(n)$ with the length $n\to\infty$.} classical locally testable codes (LTCs) and quantum low-density parity-check (qLDPC) codes~\cite{Dinur:2021,Panteleev&Kalachev:2021:ltc, leverrierQuantumTannerCodes2022a, dinurGoodQuantumLDPC2023} are all based on the idea of lifting a small local (tensor) product code to a~large global code. However, for these constructions to work, it is required that the local product code has a~very specific property that can be expressed topologically as a~coboundary expansion of a~bipartite graph with a~sheaf on it~\cite{kalachevTwosidedRobustlyTestable2023}. When we lift the product code, its local coboundary expansion propagates to the~global one\footnote{Originally this was only formally shown for the \emph{quantum Tanner codes}~\cite{leverrierQuantumTannerCodes2022a}, the simplified version of the \emph{expander lifted product codes}~\cite{Panteleev&Kalachev:2021:ltc}. However, this holds for \emph{all} current constructions~\cite{golowichNLTSHamiltoniansStronglyExplicit2023}. }~\cite{hopkinsExplicitLowerBounds2022a}, which implies many interesting properties of the obtained codes, including single-shot~\cite{guSingleshotDecodingGood2023} linear time decoding~\cite{Gu:stoc2023:qpdpc-decoder, Leverrier:qldpcdecoder:2023, Dinur:2021}, NLTS property~\cite{anshuNLTSHamiltoniansGood2023}, \cite{golowichNLTSHamiltoniansStronglyExplicit2023}, and linear energy barriers~\cite{williamsonLayerCodes2023}.  
Good\! qLDPC\! codes\! can\! also\! be\! converted to\! geometrically local codes~\cite{portnoyLocalQuantumCodes2023},\! \cite{williamsonLayerCodes2023, linGeometricallyLocalQuantum2023}.  

Sheaf theoretic approach to classical linear codes, has spurred a lot of interest recently~\cite{dinurHighDimensionalExpanders2017, Meshulam:2018, Breuckmann:balanced:2021}, \cite{Panteleev&Kalachev:2021:ltc, kalachevTwosidedRobustlyTestable2023, firstGoodQueryLocally2023, dinurNewCodesHigh2023, dinurqLTC:2024}. In fact, already in~\cite{dinurHighDimensionalExpanders2017}, Irit Dinur and Tali Kaufman study\footnote{In the language of algebraic topology,  what they study corresponds to the idea of \v{C}ech cohomology of a~sheaf.}, in the context of property testing, the \emph{obstructions} to gluing together \emph{local} pieces of data into the \emph{global} one, and introduced \emph{agreement tests}.

In the current work, we conceptualize and generalize these ideas along with more recent research~\cite{firstGoodQueryLocally2023, dinurNewCodesHigh2023, dinurqLTC:2024}, and introduce a~new concept that we call a~\emph{sheaf code}. Essentially, a~sheaf code is just a~specific type of Tanner code 
$\cF(X) = \{c\in\F_2^n \mid c|_{X_\sigma} \in \cF_\sigma, \sigma\in X \}$
defined by a \emph{hierarchical collection}\footnote{It means that for $\sigma\le \tau$ in $X$ we have $X_\tau \subseteq X_\sigma$, and $c\in\cF_\sigma$ implies $c|_{X_\tau}\in\cF_\tau$ . We~also assume that $X$ has \emph{exactly} $n$ maximal elements identified with the elements of $[n]$, and $\cF_{\sigma} = \F_2$, $\sigma\in [n]$.} of local codes $\cF_\sigma\subseteq\F_2^{X_\sigma}$, $X_\sigma\subseteq [n]$, indexed by some poset $X$, playing the role of a~topological space\footnote{In fact, finite topological spaces and finite posets are almost the same thing (see, Appendices~\ref{sc:gen-sheaf} and \ref{sc:posets}).}, and called its \emph{coded space}\footnote{This term is inspired by the term ``ringed space'' from  sheaf theory, meaning a~topological space equipped with a~sheaf of rings on it.}. This allows one to deal with such linear codes using topological methods. In particular, it is possible to study cohomology groups $H^i(X; \cF)$ of these codes with local coefficients in $\cF$ (see Section~\ref{sc:cohom}).

The groups $H^i(X; \cF)$ are very interesting since they give us a~very general way to define quantum codes out of the classical sheaf codes~\cite{firstGoodQueryLocally2023, dinurNewCodesHigh2023, dinurqLTC:2024}. In fact, many existing qLDPC codes can be described in this way. For example, the famous Kitaev toric code~\cite{Kitaev:2002} defined on the natural cellulation of a~torus $X = C_{\ell}\times C_{\ell}$, viewed as a cell poset, where $C_\ell$ is a cycle graph with $\ell$ vertices, is $H^1(X; \underline{\F_2})$ if we denote by $\underline{\F_2}(X)$ the~\emph{constant sheaf code} on~$X$, where all local codes are repetition codes. 

Another example is the codes form~\cite{dinurGoodQuantumLDPC2023}, obtained by changing the roles of checks and qubits in the expander lifted product codes\footnote{While both codes are lifted product codes~\cite{Panteleev&Kalachev:2021}, they also belong to the classes of fiber bundle~\cite{Hastings:2021:fiber} and balanced product~\cite{Breuckmann:balanced:2021} codes.}~\cite{Panteleev&Kalachev:2021:ltc}. These codes can be represented as $H^1(X;\cF)$, where $\cF(X) = \cA(\Gamma)\otimes_G \cB(\Gamma)$ is the~balanced product (Definition~\ref{df:balanced-prod-sheaf}) of two Z\'{e}mor codes~\cite{Zemor:2001} $\cA(\Gamma)$ and $\cB(\Gamma)$, considered as sheaf codes on~$\Gamma$. Note that the classical codes $\cF(X)$ are essentially the same as the~good classical LTCs constructed in~\cite{dinurLocallyTestableCodes2022}.

A sheaf theoretic approach to Tanner codes allows us to define a~new property of such codes, that we call \emph{maximal extendibility}. This idea is inspired by the notion of \emph{maximally recoverable} (\emph{MR}) codes~\cite{chenMaximallyRecoverableProperty2007, gopalanExplicitMaximallyRecoverable2014}, studied in the context of locally recoverable codes~\cite{tamoFamilyOptimalLocally2014, golowichQuantumLocallyRecoverable2023}.  

Given a sheaf code $\cF(X)\subseteq \F^n$,  we say that an index set $U\subseteq X$ of local codes is \emph{extendable} if every ``local'' codeword ${c\in\F^{\cup_{\sigma\in U}X_\sigma}}$ satisfying only local  constraints $c|_{X_\sigma}\in\cF_\sigma$, ${\sigma\in U}$, can be extended to a~``global'' codeword ${\hat{c}\in \cC}$, satisfying all the constraints\footnote{In the language of sheaf theory, ``local'' and ``global'' codewords are called respectively \emph{local} and \emph{global sections}.}. Furthermore, we say that $\cC$ is \emph{maximally extendable} for a~class of codes on the space $X$ (e.g., all product codes $\cA\otimes \cB$ with fixed dimensions of $\cA, \cB\subseteq \F_2^n$) if every set $U\subseteq X$ extendable in \emph{some} code from this class is also extendable in $\cC$.  As it turns out, the maximally extendable\footnote{Since every information set $U\subseteq [n]\subseteq X$ for $\cF(X)$ is clearly extendable, then for a~maximally extendable code in a~family of codes of the same dimension, if $S$ is an~information set in \emph{some} code from this family, then $S$ is also an~information set of $\cF(X)$. Hence, every maximally extendable code is also maximally recoverable. However, we do not know whether the converse is also true.} product codes have asymptotically optimal coboundary expansion properties~\cite{kalachev:2024} for \emph{any} number of component codes. Previously this was only known in the case of \emph{two} codes~\cite{dinurGoodQuantumLDPC2023,kalachevTwosidedRobustlyTestable2023}.

The expansion properties of product codes in more than two dimensions are very important in the context of quantum LTCs~\cite{Aharonov:2015}. One possible way to attack the qLTC conjecture~\cite{eldarLocalHamiltoniansWhose2017}, \cite{crossQuantumLocallyTestable2023}, \cite{dinurqLTC:2024}, positing the existence of good locally testable qLDPC codes (qLTCs), is to move away from the current paradigm~\cite{Panteleev&Kalachev:2021:ltc, leverrierQuantumTannerCodes2022a, dinurGoodQuantumLDPC2023, dinurqLTC:2024} of defining codes on $\ell$-fold covering of products of graphs, to a~more general paradigm of defining codes on arbitrary cubical complexes, which only \emph{locally} looks like graph products. We think that an~interesting direction of future research is to define sheaf codes on Ramanujan cubical complexes constructed in~\cite{rungtanapiromInfiniteSeriesQuaternionic2019}. 

The rest of the paper is organized as follows. In Section~\ref{sc:def}, we introduce main definitions and notations. In section~\ref{sc:sheaf}, we define sheaf codes and prove our main result. In sections \ref{sc:examples}--\ref{sc:exp-sheaf-codes} we give examples, introduce operations, and, finally, study the cohomology and the expansion of sheaf codes. We also have several appendices with some supplementary information.

\section{Basic Definitions and Notations}\label{sc:def}

Given a~set $S$ and a~field $\F$, let $\F^S$ be the vector space of all formal $\F$-linear combinations $v = \sum_{s\in S} v(s)\cdot s$, which elements we also represent as vectors $(v(s))_{s\in S}$ over $\F$ indexed by the set~$S$. Thus, $S \subseteq \F^S$ is a basis of the vector space $\F^S$, and $\F^S \le \F^T$ when $S\subseteq T$. 

We view matrices $M\in \F^{A\times B}$ as tables $M(a,b)$ indexed by $a\in A$, $b\in B$, and denote by $M|_{J}$ the restriction $(M(a,b))_{(a,b)\in A\times J}$ of $M$ to a~subset of column indices $J\subseteq B$.  We also say that a~matrix $M$ is \emph{$w$-limited} if the number of non-zero elements in every row and column of $M$ is no more than $w$. 

A \emph{poset} is a~set $X$ equipped with a~partial order~$\le$. A~\emph{chain} in $X$ is a~subset where all elements are comparable. The \emph{height} of a~poset $X$ is the size of its largest chain minus $1$. A~poset of finite height can be described by a~special graph, called its \emph{Hasse diagram}, where the set of vertices is $X$, and we connect $\sigma,  \tau\in X$ by an~edge and write ${\sigma\prec \tau}$ \Iff ${\sigma < \tau}$ and there is no $\pi\in X$ such that $\sigma < \pi <\tau$. Furthermore, we often assume that a~poset $X$ comes with a~\emph{grading map} $\rho\colon X\to\Z$  such that ${\sigma\prec \tau}$ implies $\rho(\tau)=\rho(\sigma)+1$, in which case we call it a~\emph{graded poset}. We consider graphs, simplicial, and cubical complexes as graded poset, where the grading is the dimension $\dim \sigma = i$ of a~cell $\sigma$ called an~\emph{$i$-cell}. We put $X(i) = \{\sigma\in X\mid \dim \sigma = i\}$ and call the subposet $\sk_i X = \cup_{j\le i} X(j)$ the \emph{$i$-skeleton} of $X$.

We write $\cV \le \cV'$ if a~vector space $\cV$ is a~subspace of a vector space~$\cV'$. 
A~linear $[n,k,d]$ code over $\F$ is a~subspace $\cC\le \F^S$, $\abs{S} = n$, such that $k=\dim \cC$,  $d=\min_{c\in\cC\setminus\{0\}} \abs{c}$, where $\abs{c}$ is the \emph{Hamming weight} of the vector $c$, i.e., the number of non-zero elements $c(x)$, ${x\in S}$. The parameters $n$, $k$, and $d$ are called respectively the \emph{length}, the \emph{dimension}, and the \emph{distance} of $\cC$. The code is usually defined either as $\cC = \ker H$ or as $\cC = \im G^\T$, in which case $H$ and $G$ are called respectively its \emph{parity-check} and \emph{generator} matrices. We~also say that $I\subseteq S$ is an~\emph{information set} for $\cC$ if $\abs{I} = \dim \cC = \dim \cC|_I$, where $\cC|_I = \{c|_I \mid c\in C\}$. 
The~\emph{dual code} for $\cC\le \F^S$ is the code 
\[\cC^\perp = \{a\in\F^S\mid \dtp{a}{b} = 0 \text{ for all } b\in\cC\},\] 
where $\dtp{a}{b} = \sum_{x\in S} a(x) b(x)$ is the \emph{dot product} in $\F^S$.

By a~quantum $\llbracket n, k, d \rrbracket$~code over $\F$, we mean a~CSS code~\cite{CSS:1996, CSS2:1996}, which we represent as a~quotient $Q = \cA/\cB$ of two linear codes $\cA,\cB\le \F^{S}$, and define its length as $n = \abs{S}$, its \emph{dimension} as $k = \dim \cA - \dim \cB$, and its \emph{minimal distance} as $d = \min(d_X, d_Z)$, where $d_X = \min_{c\in \cA\setminus\cB}\abs{c}$ and $d_Z = \min_{c\in \cB^\perp\setminus\cA^\perp}\abs{c}$. Sometimes we omit $d$ and just say $[n,k]$ and $\llbracket n, k\rrbracket$ code.

\section{Sheaf Codes}\label{sc:sheaf}

\subsection{Sheaves in Topology}\label{ssc:topology}

Informally, a~sheaf is a~map $\cF$ assigning to \mbox{every} open set $U$ of a~topological space $X$ some set $\cF(U)$ of elements interpreted as ``functions on $U$'' such that for every ``compatible'' family of \emph{local} functions ${s_i \in \cF(U_i)}$, ${i\in I}$, there exists a~unique \emph{global} function $s \in \cF(\cup_{i\in I} U_i)$ ``compatible'' with all of them. 

A~canonical example is the~sheaf of continuous real functions on a~topological space $X$, assigning to each open set $U\subseteq X$ the set $\cF(U)$ of all continuous functions $s\colon U \to \RR$. It is not hard to see that if a~collection of local continuous functions $s_i\colon U_i \to \RR$, $i\in I$, is compatible (i.e., they pairwise agree: $s_i|_{U_i \cap U_j} = s_j|_{U_i \cap U_j}$ for all $i, j\in I$), we can glue them together into a~unique global continuous function $s\colon \cup_{i\in I} U_i \to \RR$ such that $s|_{U_i} = s_i$ for all $i\in I$. Another well-known example is a~\emph{constant sheaf} $\underline{\F}$ defined for arbitrary set $\F$ (e.g., the field $\R$), where $\cF(U)$ is the set of all constant functions $f\colon U\to \F$.

Formally, given a~topological space $X$, a \emph{sheaf} on $X$ is a function $\cF$ assigning to each open set $U\subseteq X$ the set $\cF(U)$ of elements called \emph{local sections} and to each pair of nested open sets $V\subseteq U$ a~map ${\cF_{U\to V}\colon \cF(U) \to \cF(V)}$ called a~\emph{restriction map} such that:
\begin{enumerate}
    \item[(1)] for all open sets $W \subseteq V \subseteq U$, we have: 
    \[\cF_{V \to W} \circ \cF_{U \to V} = \cF_{U \to W} \text{ and } \cF_{U\to U} = \id_U;\]
    \item[(2)] if a~collection $(s_i)_{i\in I}$ of local sections $s_i\in \cF(U_i)$ on open sets $U_i\subseteq X$ is \emph{compatible} (i.e., they pairwise agree: $\cF_{U_i\to U_i \cap U_j} (s_i) = \cF_{U_j\to U_i \cap U_j} (s_j)$ for all $i, j\in I$), then there exists a~unique \emph{global} section $s\in \cF(U)$ on $U = \cup_{i\in I} U_i$ compatible with all of them: $\cF_{U\to U_i}(s)=s_i$, $i\in I$. 
\end{enumerate}

Very often the sets $\cF(U)$ come with some additional structure, making them abelian groups, rings, or vector spaces over a field $\F$, in which case we say that we have a~\emph{sheaf of abelian groups}, \emph{rings}, or \emph{vector spaces}. Here, it is additionally required that all restriction maps are morphisms in the corresponding categories $\Ab, \Rng, \Vc_\F$. For example, since in a~sheaf $\cF$ of continuous real functions we can add, multiply, and make $\R$-linear combinations of the elements from $\cF(U)$, we can also view $\cF$ as a~sheaf with the values in the categories $\Ab, \Rng, \Vc_\R$, respectively; and one can easily check that the restriction maps $\cF_{U\to V}\colon s \mapsto s|_V$ here are indeed morphisms in these categories.

Since every code $\cC\le \F^S$ is just a~set of functions $c\colon S\to \F$, it is natural to extend the idea of sheaves to codes. It was observed by Assmus~~\cite{assmusCategoryLinearCodes1998} that the~class of linear codes over a~field $\F$ can be viewed as a~category, where the objects are codes and the morphisms are $\F$-linear maps $\phi\colon \cA\to \cB$ such that $\abs{\phi(x)} \le \abs{x}$ for all $x\in\cA$. However, for our purposes, it is much more convenient to consider the category $\LC_\F$ of linear codes over $\F$, where a~\emph{morphism} between codes $\cA \le \F^A$ and $\cB \le \F^B$ is an~$\F$-linear map $\phi\colon \F^A\to \F^B$ such that $\phi(\cA) \subseteq \cB$ and the matrix of $\phi$ is $1$-limited. Note that the notion of \emph{isomorphism} in $\LC_\F$ (i.e., a~bijective morphism $\phi$ such that $\phi^{-1}$ is also a~morphism) is exactly the same as in the category of Assmus, and coincides\footnote{Indeed, the classical McWilliam's extension theorem~\cite{Macwilliams:1962} asserts that any partial $\F$-linear isometry $\phi\colon \cA\to\cB$ of codes $\cA\le \F^{A}$ and $\cB\le \F^{B}$ can be extended to a~full $\F$-linear isometry $\hat{\phi}\colon \F^{A}\to\F^{B}$, where by \emph{isometry} we mean a~map preserving Hamming distance.} with the notion of \emph{monomial equivalence} of codes.  Moreover, in the proposed category $\LC_\F$ the \emph{dual} $\phi^*\colon \F^B \to \F^A$ of a~morphism $\phi\colon \F^A\to \F^B$, given by the transpose matrix, is also a~morphism, which is not the case for the category of Assmus.
Hence, we obtain the following general definition\footnote{The general idea to assign linear codes to open sets of topological space was originally presented by the first author in~\cite{Panteleev:2022:qmath}. It is just a~natural reformulation and generalization of the ideas proposed by Irit Dinur and Tali Kaufman in~\cite{dinurHighDimensionalExpanders2017}. Later, in the course of writing this manuscript, we became aware of other relevant works~\cite{firstGoodQueryLocally2023, dinurNewCodesHigh2023,dinurqLTC:2024}.}.
\begin{definition}[sheaf of linear codes]
    A \emph{sheaf of linear codes} over a~field $\F$ on a~topological space $X$ is a~sheaf of vector spaces over $\F$ on $X$, where the local sections are linear codes and the restriction maps are morphisms from the category $\LC_\F$. 
\end{definition}

Now we are ready to give the main definition.
\begin{definition}[sheaf code, general version]\label{df:sheaf-code-gen}
    A \emph{sheaf code} is a~linear code $\cC$ equipped with a sheaf $\cF$ of linear codes on a~topological space $X$, which we call its \emph{coded space}, such that $\cC = \cF(X)$, i.e., $\cC$ is the set of \emph{global sections}. We often omit $X$ and denote a~sheaf code $\cF(X)$ as $\cF$.
\end{definition}

\subsection{Sheaves on Posets and Tanner Sheaves}\label{sc:tanner-sheaf}

It~is known that every poset $X$ can be considered as a~topological space with the \emph{Alexandrov topology}, where the open sets are \emph{upper sets}, i.e., the sets $U\subseteq X$ such that $x\in U$ and $x\le y$ always implies $y\in U$. Moreover, \emph{finite} topological spaces are essentially the same thing as posets (see Appendices~\ref{sc:gen-sheaf} and \ref{sc:posets}). Therefore, in what follows, instead of arbitrary topological spaces we restrict ourselves to posets.  Given a~poset $X$, let $X_*$ be the set of all its maximal elements, $X_\sigma = \{x\in X_*\mid x\ge \sigma \}$, $X_{\ge \sigma } = \{x \in X\mid x \ge \sigma \}$, and $X_U = \cup_{\sigma\in U} X_\sigma$.

\begin{definition}[sheaf code]\label{df:tanner_sheaf}
    A~\emph{sheaf code} is a~code\footnote{Such codes are usually called \emph{Tanner} or \emph{generalized LDPC codes}.} 
    \[\cF(X) = \{c\in \F^{X_*} \mid  \forall \sigma\in X: c|_{X_\sigma}\in \cF_\sigma \}\]
    together with a~fixed hierarchical collection $(\cF_\sigma)_{\sigma\in X}$ of \emph{local codes} $\cF_\sigma\subseteq \F^{X\sigma}$ indexed by the elements of a~graded~poset $X$ called its \emph{coded space} such that\footnote{This condition ensures that each larger local code $\cF_\sigma$ satisfies all the constraints already imposed by the smaller ones $\cF_\tau$, $\tau\ge\sigma$.} for every $\sigma\le \tau$ from $X$ if $c\in\cF_\sigma$ then $c|_{X_\tau} \in \cF_\tau$.
\end{definition}

Now, for every open set $U\subseteq X$, we can define its code of \emph{local sections} as 
\[\cF(U) = \{c\in \F^{X_U} \mid  \forall \sigma\in U: c|_{X_\sigma}\in \cF_\sigma\}.\]

One can see that $\cF(X_{\ge\sigma})=\cF_\sigma$ for every $\sigma\in X$.
The reader can also easily check that the sets $\cF(U)$ we assign to open sets $U\subseteq X$ and the natural restriction maps $\cF_{U\to V}\colon c\mapsto c|_{X_V}$ satisfy conditions (1) and (2) from the general definition of sheaves in Subsection~\ref{ssc:topology}.

If in a~sheaf code all local codes are repetition codes over $\F$, then we denote it by $\underline{\F}(X)$ and call the~\emph{constant sheaf code on $X$ over $\F$} since for every open set $U$ the set $\underline{\F}(U)$ consists of all constant functions $c\colon X_U\to\F$.

We usually view $X$ as a~poset of cells of some cell complex, where the grading is the \emph{dimension} $\dim \sigma$ of a~\emph{cell} $\sigma\in X$, and $\dim X = \max_{\sigma\in X} \dim \sigma$ is called the \emph{dimension} of $X$. Furthermore, we say that $\cF_\sigma$ is a~\emph{level-$i$} local code, where $i = \dim X -\dim \sigma$. The level-$0$ local codes $\cF_\sigma$ assigned to the maximal elements ${\sigma\in X_*}$ are called \emph{trivial}, and we assume that $\cF_\sigma = \F^{\{\sigma\}}\cong \F$. 

We are mostly interested in codes $\cF$, which we call \emph{Tanner sheaf codes}, where  \emph{all} local codes are completely defined by the level-$1$ codes, i.e., 
\[
\cF_{\sigma}\! =\! \{ c\in \F^{X_\sigma}\! \mid c|_{X_\tau}\!\in\! \cF_\tau, \tau \ge \sigma, \dim \tau = \dim X - 1 \}.
\]

By default, we view codes $\cC \subseteq \F^S$ as sheaf codes, where the \emph{default coded space} for $\cC$ is the poset $\hat{S} = S\sqcup\{*\}$ with $\sigma < \tau$ \Iff $\sigma=*$ and ${\tau\in S}$, and the local codes $\cC_*= \cC$ and $\cC_{s_i}=\F^{\{s_i\}}\cong \F$ for $\sigma\in S = \{s_1,\dots,s_n\}$. Geometrically, the poset $\hat{S}$ can be represented as the $0$-dimensional simplicial complex $\{\varnothing, \{s_1\},\dots,\{s_n\}\}$ (\emph{empty graph}\footnote{Note that it is connected as a~finite topological space.} with $n$ vertices). 

Every classical LDPC code $\cC$ can be viewed as sheaf code on the poset $X$ of height~$1$, which Hasse diagram is the Tanner graph of the code and the level-$1$ local codes are single parity-check codes. Another example of sheaf codes are \emph{Tanner graph codes}~\cite{Tanner:1981}, such as Sipser-Spielman expander codes~\cite{Sipser:1996}, where the coded spaces are usually $\Delta$-regular graphs $X$, and the assigned to its vertices local codes are all equivalent to one \emph{base code}\footnote{If the base code $\cC_0$ is from some dedicated class of codes, then such codes are often called \emph{lifted codes} (e.g., see \cite{LiftingSmallLocally2019}). This way, one can consider \emph{lifted Reed-Solomon codes}~\cite{guoNewAffineinvariantCodes2013}, \emph{lifted projective Reed-Solomon codes}~\cite{lavauzelleLiftedProjectiveReed2019}, \emph{lifted tensor product codes}~\cite[Lemma~4.1]{dinurLocallyTestableCodes2022}.} $\cC_0\le \F^{\Delta}$.

The (\emph{tensor}) \emph{product code}\footnote{This definition is a~special case of the product of sheaf codes described in Section~\ref{sc:oper-sheaf}.} $\cC_1\otimes\dots\otimes\cC_D$ of codes ${\cC_i\le \F^{X_i}}$, $i\in [D]$ (see Appendix~\ref{sc:lp-codes}), can be also considered as a~sheaf code $\cC$ on $X = \hat{X}_1 \times \dots \times \hat{X}_D$, where $(x_1,\dots,x_D)\le (x'_1,\dots,x'_D)$ \Iff $x_i\le x'_i$, $i\in [D]$. Here the poset $X$ can be viewed as a~$(D-1)$-dimensional simplicial complex\footnote{In fact, it is the clique complex of the complete $D$-partite graph, and when $D=2$ it is just a~complete bipartite graph.}, where every $(x_1,\dots,x_m)\in X$ corresponds to the simplex 
\[\{(i, x_i) \mid x_i\ne * , i \in [D]\} \subseteq X_1\sqcup \dots \sqcup X_D.\]
We define the local codes for $\sigma=(x_1,\dots,x_D)\in X$ as 
\[\cC_\sigma = (\cC_1)_{x_1}\otimes\dots\otimes (\cC_D)_{x_D} \le \F^{X_1\times\dots\times X_D},\]
where every component code $\cC_i$ is considered as a~sheaf code $\cC_i=\cC_i(\hat{X}_i)$ on its default coded space $\hat{X}_i$.

\subsection{Maximally Extendable Sheaf Codes}\label{sc:main}

Given a~sheaf code $\cF$ on $X$, we say that a~local section $s\in\cF(U)$ for an~open set $U\subseteq X$ is \emph{extendable} if it can be extended to a~global section  $\hat{s}\in \cF(X)$ where $\cF_{X\to U}(\hat{s}) = s$; and we say that $U$ is \emph{extendable} if all its local sections are extendable.

\begin{definition}[ME sheaf codes]\label{df:MR-sheaf-code}
Given a class $\fF$ of sheaf codes\footnote{Note that our definition of maximal extendibility can be applied to sheaves with values in \emph{arbitrary} category, not necessary $\LC_\F$.} on a~space $X$, we say that $\cF\in \fF$ is \emph{maximally extendable} (\emph{ME}) for $\fF$  if every open set extendable in some code from $\fF$ is also extendable in $\cF$.  
\end{definition}

If $\vv=\{v_i\}_{i\in I}$ is a~set of free independent variables indexed by $I = I(\vv)$, then by $\F_p[\vv]$ and $\F_p(\vv)$ we denote respectively the ring of polynomials and the field of rational functions in variables $\vv$ over the prime field~$\F_p$. 
Consider a~sheaf code $\brF$ on $X$ over the field $\F_p(\vv)$, where all local codes $\brF_\sigma$ come with fixed \emph{polynomial} parity-check matrices $H_\sigma$ over the ring $\F_p[\vv]$. 
Now, given a~vector $\va\in\F_{q}^{I(\vv)}$, where $q=p^t$, we can substitute it instead of $\vv$ in all  $H_\sigma$ and get  the sheaf code $\brF[\va]$ over $\F_{q}$, called an~\emph{instance of $\brF$ over $\F_{q}$}, with local codes $\brF_\sigma(\va) = \ker H_\sigma(\va)$, $\sigma\in X$. Moreover, we say that $\brF[\va]$ is a~\emph{proper instance} if we also have $\dim \brF[\va] = \dim \brF$. Let $\brF[\F_q]$ be the set of all proper instances of $\brF$ over $\F_q$, and put $\brF[\cdot] = \cup_{\F_q\supseteq\F_p} \brF[\F_q]$.
\begin{definition}[generic code]
We say that $\brF$ is a~\emph{generic\footnote{This term is inspired by the notion of \emph{genericity} (e.g., see~\cite{brakensiekGenericReedSolomonCodes2023}). A very similar notion called \emph{topology} was earlier proposed in~\cite{gopalanExplicitMaximallyRecoverable2014}.} code for $\brF[\cdot]$} and call $\brF[\va]$ \emph{maximally extendable} for $\bar{\cF}$ if $\brF[\va]$ is maximally extendable for the class $\brF[\cdot]$.
\end{definition}

Let $S$ be a vector or a matrix over $\F(\vv)$. We say that $S$ is \emph{polynomial} if all the entries  are polynomials from $\F[\vv]$, and define $\deg S$ to be the maximal degree of polynomials from~$S$.
Let us first prove a~simple, but very useful lemma.
\begin{lemma}\label{lm:par-gen}
    If $H$ is a~polynomial parity-check matrix over $\F[\vv]$ for a~code $\cC$ of length $n$ over a~field $\F(\vv)$, then $\cC$ also has a~polynomial generator matrix\footnote{Here we do not assume that $H$ and $G$ are full rank.} $G$ such that  ${\deg G \le n\deg H}$\textup{;} and for all $\va\in \F^{I(\vv)}$, when $\rk H(\va) = \rk H$, we also have $\rk G(\va) = \rk G$.
\end{lemma}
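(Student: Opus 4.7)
The plan is to build $G$ from every possible Cramer-type cofactor vector associated with an $r\times r$ minor of $H$, so that the rank of $G(\va)$ is certified by whichever minor survives the substitution. Set $r=\rk H$, so $\dim\cC=n-r$. For every pair $(L,K)$ where $L$ is an $r$-subset of row indices of $H$ and $K$ an $(r+1)$-subset of $[n]$, I define a row vector $g_{L,K}\in\F[\vv]^n$ supported in $K$ whose entries on $K$ are the signed $r\times r$ minors
\[
g_{L,K}|_j = (-1)^{\sigma(j,K)}\det H|_{L,\,K\setminus\{j\}}, \quad j\in K,
\]
with the usual cofactor sign, and let $G$ be the matrix whose rows are all these $g_{L,K}$. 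Each entry of $G$ is an $r\times r$ minor of $H$, so $\deg G\le r\deg H\le n\deg H$, giving the degree bound. The identity $HG^{\T}=0$ follows from a standard determinantal identity: for any row $h$ of $H$, the inner product $\dtp{h}{g_{L,K}}$ is the cofactor expansion along the top row of the $(r+1)\times(r+1)$ matrix formed by stacking $h|_K$ above $H|_{L,K}$; this determinant vanishes either because the top row repeats a row of $H|_{L,K}$, or because the full matrix is an $(r+1)\times(r+1)$ submatrix of $H|_K$, whose rank is at most $\rk H=r$. Hence every row of $G$ belongs to $\ker H=\cC$.

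For the rank statement, fix $\va$ with $\rk H(\va)=r$; the containment $\im G(\va)^{\T}\subseteq \ker H(\va)$ already gives $\rk G(\va)\le n-r$, so the substance is a lower bound. I choose $L$ with $\rk H|_L(\va)=r$, then $J\subseteq[n]$, $|J|=r$, with $d:=\det H|_{L,J}(\va)\ne 0$; both exist precisely because $\rk H(\va)=r$. For every $i\in[n]\setminus J$, the row $g_{L,J\cup\{i\}}(\va)$ is supported in $J\cup\{i\}$ and its $i$-th entry equals $\pm d\ne 0$, so the projection of these $n-r$ rows onto the columns $[n]\setminus J$ is a diagonal matrix with nonzero diagonal, forcing $\rk G(\va)\ge n-r$. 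The same argument run over $\F(\vv)$, with $(L,J)$ chosen so that the polynomial $\det H|_{L,J}$ is nonzero (possible because $\rk H=r$ over $\F(\vv)$), gives $\rk G=n-r=\dim\cC$, so $G$ is indeed a generator matrix of $\cC$.

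The main obstacle is conceptual rather than technical: the naive Cramer construction, which fixes a single $r\times r$ submatrix of $H$ of full rank and parameterises $\ker H$ accordingly, fails the rank-preservation requirement whenever $\va$ makes that particular submatrix singular while still leaving $\rk H(\va)=r$. Indexing the rows of $G$ by \emph{all} admissible pairs $(L,K)$ supplies exactly the redundancy needed so that whichever maximal minor of $H(\va)$ remains nonzero can be used to certify full rank of $G(\va)$; the degree bound survives because each entry is still a single $r\times r$ minor of $H$.
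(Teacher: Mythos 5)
Your proof is correct, and the construction is essentially the paper's: the Cramer cofactor rows $g_{L,K}$ you stack are (up to sign) exactly the rows of the paper's blocks $G_M = \det M\cdot\pi^{-1}\bigl([-P^{\T}, I_{n-m}]\bigr)$, which likewise package $r\times r$ minors of $H$ via the adjugate $\det M\cdot M^{-1}$, and both rank-preservation arguments reduce to the same observation that a surviving maximal minor $\det M(\va)\neq 0$ produces a scaled-identity or diagonal block of size $n-r$ in $G(\va)$. Your presentation is a bit more direct, avoiding the detour through the normal form $[I_m,P]$ and verifying kernel membership by a clean $(r+1)\times(r+1)$ determinantal vanishing rather than by citing the standard generator for a systematic parity-check, and it records the marginally sharper bound $\deg G\le r\deg H$, but the underlying idea is identical.
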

\begin{proof}
 For every non-singular $m\times m$ minor $M$ in $H$, where $m=\rk H$, denote by $H_M$ the containing it $m\times n$ submatrix, and choose a~permutation $\pi\in \Sym_n$ such that $\pi(H_M) = [M, N]$, and $M$ is invertible. This gives us another parity-check matrix $\pi^{-1}([I_m, P])$ for $\cC$, where $P = M^{-1}N$. Next, let us define the generator matrix $G_M = \det M\cdot \pi^{-1}([-P^\T, I_{n-m}])$ for $\cC$, which is polynomial since $\det M\cdot M^{-1}$ is polynomial. From $\deg \det M\cdot M^{-1} \le (m-1)\deg M\le (m-1)\deg H$ we get $\deg G_M \le (m-1) \deg H + \deg H \le n \deg H$. Also, if $\det M(\va) \ne 0$ for $\va\in\F^{I(\vv)}$ then $\rk G_M(\va) = n-m$.
 
 Now, we combine the rows of $G_M$ for all such non-singular $m\times m$ minors $M$ in $H$ into one large generator matrix $G$ for $\cC$, where $\deg G\le n\deg H$. Finally, if for $\va\in \F^{I(\vv)}$ we have $\rk H(\va) = \rk H$, then $\det M(\va) \ne 0$ for some~$M$, and $\rk G(\va) = \rk G_M(\va) = \rk G$.
 \end{proof}

We are ready to state our main result\footnote{Its proof is inspired by the proof from~\cite[Lemma~32]{gopalanExplicitMaximallyRecoverable2014}.}. 

\begin{theorem}[ME sheaf codes exist]\label{th:main}
    For every generic code $\brF$ on $X$ over $\F_p(\vv)$ and ${p^t\! > q_{\brF}\! =\! d\abs{X}^2 2^{\abs{X}}}$, where $d = \max_{\sigma\in X}\deg H_\sigma$, there exists an~ME code $\brF[\va]$ over $\F_{p^t}$, and the probability that $\brF[\va]$ is not ME when we choose $\va\in \F_{p^t}^{I(\vv)}$ uniformly at random is bounded above as $q_{\brF}/p^t$. 
\end{theorem}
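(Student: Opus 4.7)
The plan is to encode extendability in $\brF[\va]$ as the non-vanishing of an explicit polynomial in $\va$ over $\F_p$, and then apply the Schwartz--Zippel lemma together with a union bound over the at most $2^{|X|}$ open sets of $X$. The central technical ingredient is a clean rank-theoretic characterization of extendability: for an open set $U \subseteq X$, let $V = X_* \setminus X_U$, let $H(\vv)$ be the global parity-check matrix (obtained by stacking all $H_\sigma$ padded with zeros outside $X_\sigma$), let $H_U(\vv)$ be the local parity-check matrix at $U$ (stacking $H_\sigma$ for $\sigma \in U$), and let $H_V(\vv)$ denote $H(\vv)$ restricted to the columns indexed by $V$. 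A dimension count using the identities $\cF(X)(\va) = \ker H(\va)$, $\cF(U)(\va) = \ker H_U(\va)$, $\cF(X)(\va) \cap \F^V = \ker H_V(\va)$, and $|X_*| = |X_U| + |V|$, shows that
\[
U \text{ is extendable in } \brF[\va] \iff \rk H(\va) = \rk H_U(\va) + \rk H_V(\va).
\]

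Call $U$ \emph{generically extendable} if it is extendable in some proper instance of $\brF$; for such $U$ the rank identity above holds over $\F_p(\vv)$. Since the substitution $\vv \mapsto \va$ can only decrease the rank of a polynomial matrix, whenever the generic ranks $\rk H$, $\rk H_U$, and $\rk H_V$ are all attained at $\va$ the identity specializes and $U$ is extendable in $\brF[\va]$. Moreover, $\brF[\va]$ is proper exactly when $\rk H(\va) = \rk H$. Each such rank-maximality condition is equivalent to the non-vanishing of some explicit maximal minor, which is a polynomial in $\F_p[\vv]$ of degree at most $r \cdot d \le |X_*| \cdot d \le |X|\, d$, since the matrix in question has rank at most $|X_*| \le |X|$ and polynomial entries of degree at most $d$. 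Lemma~\ref{lm:par-gen} can be invoked here either directly (if one prefers to express extendability through polynomial generators and rank conditions on $G|_{X_U}$) or indirectly (to confirm that polynomial generators of controlled degree remain valid at proper instances).

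I would then define
\[
P(\vv) := q(\vv) \prod_{U} p_U(\vv) \in \F_p[\vv],
\]
where $q$ is a non-vanishing maximal minor of $H$ certifying properness, and, for each generically extendable $U$, the factor $p_U$ is the product of non-vanishing maximal minors of $H_U$ and $H_V$. Each factor is non-zero by construction, so $P \not\equiv 0$, and whenever $P(\va) \neq 0$, $\brF[\va]$ is a proper instance in which every generically extendable $U$ is extendable; by definition $\brF[\va]$ is then ME for $\brF[\cdot]$. Counting at most $1 + 2 \cdot 2^{|X|}$ factors of degree at most $|X|\, d$ gives $\deg P \le 3 \cdot 2^{|X|} |X|\, d \le q_{\brF}$, and the Schwartz--Zippel lemma bounds the fraction of $\va \in \F_{p^t}^{I(\vv)}$ with $P(\va) = 0$ by $\deg P / p^t \le q_{\brF}/p^t$; for $p^t > q_{\brF}$ this is strictly less than $1$, so an ME instance exists. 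The part that requires the most care is the first paragraph: one must verify that instance extendability (where $\cF(U)(\va)$ could be strictly larger than $\cF(U)$ when $\rk H_U(\va)$ drops) really is forced by the three rank-maximality conditions together, which is exactly what the factors of $p_U$ guarantee.
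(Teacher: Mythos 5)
Your proof is correct and takes a genuinely different route from the paper. Both arguments reduce extendability of an open set $U$ to a polynomial non-vanishing condition and finish with Schwartz--Zippel, but the rank characterizations differ. The paper applies Lemma~\ref{lm:par-gen} to produce a polynomial generator matrix $G$ of degree at most $dn$ for the global code, sets $G_U = G|_{X_U}$, and encodes extendability as $\rk G_U + \rk H_U = n_U$, certified by a single $n_U \times n_U$ minor of $\left(\begin{smallmatrix} H_U & 0 \\ 0 & G_U \end{smallmatrix}\right)$. You sidestep $G$ entirely by introducing $H_V = H|_{V}$ with $V = X_* \setminus X_U$ and encoding extendability as $\rk H = \rk H_U + \rk H_V$. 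Your characterization is equivalent to the paper's---one can check $\rk G_U = n_U - \rk H + \rk H_V$ by a kernel-dimension count---but avoids Lemma~\ref{lm:par-gen} and keeps all minor degrees bounded by $nd$ rather than $dn^2$ (since $G$'s entries may have degree up to $dn$), which is what gives you the tighter per-factor degree. Your argument also needs the observation that $\rk H \ge \rk H_U + \rk H_V$ holds identically, which follows from the block-lower-triangular form of $H$ after permuting rows to put the $\sigma \in U$ rows first: these rows are supported on $X_U$ since $X_\sigma \subseteq X_U$ for $\sigma \in U$. You state this as ``the rank identity holds over $\F_p(\vv)$'' without spelling it out, but it is the correct and necessary step. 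One tiny quibble: your final inequality $3 \cdot 2^{|X|}|X|d \le q_{\brF} = d|X|^2 2^{|X|}$ requires $|X| \ge 3$; this is harmless since the result is vacuous otherwise, but strictly speaking it deserves a remark.
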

\begin{proof}
Consider a~generic code $\brF$ on $X$ over $\F_p(\vv)$, and for an~open set $U\subseteq X$ combine the rows from all $H_\sigma$, $\sigma\in U$, with the additional zero-padding if necessary, into one large parity-check matrix $H_U$ with $n_U = \abs{X_U}$ columns. It is not hard to see that $H_U$ is a~\emph{polynomial} parity-check matrix for the code $\brF(U)$. Applying Lemma~\ref{lm:par-gen} to $H=H_X$, we get a~\emph{polynomial} generator matrix $G$ over $\F_p[\vv]$ for the global code $\brF$, where $\deg G \le dn$,  $n=\abs{X_*}$. Moreover, for all \emph{proper} instances $\brF[\va]$, i.e., when $\rk H(\va) = \rk H$, the matrix $G(\va)$ is a~generator matrix for $\brF[\va] = \ker H(\va)$. 

Now for every open set $U\subseteq X$ we define the matrix $G_U = G|_{X_U}$, which is clearly a~generator matrix of the code consisting of all extendable sections from $\brF(U)$. Hence $U$ is extendable in $\brF$ \Iff $\im G^\T_U = \ker H_U$ \Iff 
\begin{equation}\label{eq:rk}
\rk G_U + \rk H_U = n_U.    
\end{equation}

Note that whenever $U$ is extendable for $\brF[\va]\in\brF[\cdot]$,  it~is also extendable in $\brF$ since $\im G^\T_U(\va) = \ker H_U(\va)$, and  from $\rk G_U(\va) + \rk H_U(\va) = n_U$ we get (\ref{eq:rk}) using\footnote{When we substitute $\va$ in a~matrix over $\F_p[\vv]$ we can not increase its rank since it is the maximal size of a~minor $M$ with $\det M \ne 0$.} $\rk G_U(\va) \le \rk G_U$, $\rk H_U(\va) \le \rk H_U$.

Obviously, condition (\ref{eq:rk}) is satisfied \Iff one can find an~${n_U\times n_U}$ non-singular minor $M_U$, i.e., $\det M_U \ne 0$, in the matrix $\rbr{\begin{smallmatrix}
H_U & 0\\
0 & G_U
\end{smallmatrix}}$. It is clear that 
\[\deg \det M_U \le n\deg G \le dn^2.\] 
Now let us fix for every extendable in $\brF$ open set $U\subseteq X$ such a~minor $M_U$ and denote by $f$ the product of $\det M_U$ for all such extendable $U$, which obviously has degree 
\begin{equation}\label{eq:q-bound}
\deg f \le dn^2 2^{\abs{X}} \le d\abs{X}^2 2^{\abs{X}}.     
\end{equation}

We claim that if $f(\va) \ne 0$ for $\va\in \F_{q}^{I(\vv)}$, ${q=p^t}$, then $\brF[\va]$ is maximally extendable for $\brF$. Indeed, let $f(\va) \ne 0$. We have $\brF[\va]\in\brF[\cdot]$ since any information set $U\subseteq X_*$ in $\brF$ is clearly extendable in it, and thus $\det M_U$ appears as a~factor in $f$. Hence $M_U(\va)\ne 0$ and $\rk G(\va) \ge \rk G_U(\va) = \abs{U} = \dim \brF$.
Moreover, if some open set $U$ is extendable in a~code from $\brF[\cdot]$, then $U$ is also extendable in $\brF$, and $\det M_U$ appears in~$f$. Thus, we get $\det M_U(\va)\ne 0$, and $U$ is extendable in $\brF[\va]$ since $\rk G_U(\va) + \rk H_U(\va) = n_U$, and therefore $\im G_U^\T(\va) = \ker H_U(\va)$.

Finally, by the Schwartz–Zippel lemma (e.g., see~\cite[Theorem~7.2]{motwaniRandomizedAlgorithms1995}), if $p^t > d\abs{X}^2 2^{\abs{X}} \ge \deg f$ then there exists $\va\in \F_{p^t}^{I(\vv)}$ such that $f(\va)\ne 0$. Therefore $\brF[\va]$ is maximally extendable, and the probability that this does not hold is bounded above as $\deg f/p^t$. 
\end{proof}

\begin{remark}
It is possible to define a~stronger form of maximum extendibility for a~code $\cF$ in a~family of sheaf codes $\fF$ on a space $X$ if for every two nested open sets $U\subseteq V$ we require that whenever in some code from $\fF$ every local section on $U$ is extendable to a~local section on $V$, then this also holds in $\cF$. With a~small modification in the proof of Theorem~\ref{th:main}, it is possible to prove its analogue for this stronger form of maximal extendibility if we put $q_{\brF} = d\abs{X}^2 2^{\abs{X}^2}$.    
\end{remark}

\section{Examples}\label{sc:examples}

In this section, for simplicity, we consider only the codes of even characteristic.
Let us illustrate the results from the previous section by giving some examples of generic sheaf codes on different complexes. We start from a~very simple example of a~generic code $\bar{\cC}$ over $\F_2(\vv)$ on the default coding space $X = \widehat{[n]}$ with only one non-trivial code $\bar{\cC}_*$ defined by the parity-check matrix $H = (v_{ij})_{m\times n}$ over $\F_2[\vv]$, where $\vv=(v_{ij})_{ij\in [m]\times[n]}$. Let $k=n-m$, it is clear that the set $\brC[\cdot]$ of all proper instances of $\bar{\cC}$ is exactly the set of all linear $[n, k]$ codes over all possible extension fields $\F_{2^t}$, which we denote by $[n,k]$. It is not hard to check that the extendable open sets $U\subseteq X$ in $\bar{\cC}$ are the subsets $U\subseteq [n]$, where $\abs{U}\le k$, and the set $X = \{*\}\cup [n]$. Indeed, if $*\in U$ then $U=X$ is clearly extendable. If $*\not\in U$ then $U\subseteq [n]$ is extendable \Iff $\dim \bar{\cC}|_U = \abs{U}$ since $\bar{\cC}(U) = (\F_p(\vv))^U$. Finally, for $U\subseteq [n]$ we have $\dim \bar{\cC}|_U = \abs{U}$ \Iff $\rk H|_{[n]\setminus U} = \rk H$ \Iff $\abs{U}\le k$. Hence, we see that $\cC\in [n,k]$ is maximally extendable for $\bar{\cC}$ \Iff for all $k$-element $U\subseteq [n]$ we have $\dim \cC|_U = k$. This means that the maximally extendable codes here are exactly the \emph{maximal distance separable} (\emph{MDS}) $[n, k]$ codes, which are usually defined by this property.

\subsection{Generic Tensor Product Codes}

Now, let us consider a~much more interesting example of a~generic code for the class $\bigotimes_{i\in [D]}[n_i,k_i]$ of product codes $\bigotimes_{i\in[D]}\cC_i$ over all possible fields $\F_{2^t}$, where we have $\cC_i\in [n_i,k_i]$. This example was first considered in~\cite{gopalanExplicitMaximallyRecoverable2014} in the context of maximally recoverable product codes.
We say that a~product code $\cC$ from the class $\bigotimes_{i\in[D]}[n_i,k_i]$ is \emph{maximally recoverable} if whenever $I$ is an~information set for some other product code $\cC'$ from this class, $I$ is also an~information set for $\cC$. This means that whenever we can uniquely \emph{recover} the erased symbols with indices $E\subseteq P = [n_1]\times\dots\times [n_D]$ in $\cC'$ since the unerased ones contain an~information set $I\subseteq P\setminus E$, then we can also recover them in $\cC$.

A~generic code for the class $\bigotimes_{i\in [D]}[n_i,k_i]$ can be defined as the tensor product code $\brC = \bigotimes_{i\in [D]}\brC_i$ over the field $\F_2(\vv)$, where each $\brC_i$ is a~generic code for $[n_i,k_i]$, defined, as in the example above, on the default space $\widehat{[n_i]}$ with a polynomial parity-check matrix $H^{(i)}$ over $\F_2[\vv^{(i)}]$, and $\vv= \sqcup_{i\in[D]} \vv^{(i)}$. Since $\deg H^{(i)} = 1$ for $i\in[D]$, then the parity-check matrix $H$ for the global code $\brC$ and the parity-check matrices $H_\sigma$ for all local codes $\brC_\sigma$, $\sigma \in X$, where $X = \widehat{[n_1]} \times \dots \times \widehat{[n_D]}$, all have degree $1$. Indeed, this is true since $H$ is obtained by stacking the matrices  $H^{(j)}=\otimes_{i\in[D]}M_i$, $j\in [D]$, of degree $1$, where $M_j=H_j$, and $M_i=\id_{n_i}$ for $i\ne j$. Hence, by Theorem~\ref{th:main} we have a~maximally extendable code for $\brC$  over $\F_q$ if $q > \abs{X}^2 2^{\abs{X}}$, where $\abs{X}=\prod_{i\in[D]} (n_i + 1)$.

\subsection{Generic Flag Product Codes}\label{ssc:flag-code}
In this section, we propose a~new class of sheaf codes called \emph{flag product codes}, generalizing the classical tensor product codes, and defined on certain flag complexes. Let us recall that a~\emph{flag complex} is a~simplicial complex\footnote{An (\emph{abstract})~\emph{simplicial complex} is a~down-closed family $X$ of sets $\sigma$ called its \emph{faces}. It is a~graded poset, where the grading $\dim \sigma = \abs{\sigma}-1$ is called the \emph{dimension}, and $\sigma$ is called a~$\dim \sigma$-face. Its $1$-skeleton $\sk_1 X$, called \emph{underlying graph}, consists of the set $V(X)$ of $0$-faces, called \emph{vertices}, and the set $E(X)$ of $1$-faces, called \emph{edges}.} $X$, where for every set of vertices $S$ pairwise connected by edges we have $S\in X$, i.e., it is completely defined by its underlying graph as the complex consisting of all its cliques. We only use flag complexes, which we call \emph{$(n_1,\dots,n_D)$-regular} or \emph{$n$-regular} when $n_i = n$, $i\in [D]$, obtained as simplicial complexes\footnote{The complex $\Delta(V)$ is usually called the \emph{order complex} of $V$.} $\Delta(V)$ consisting of all possible linearly ordered subsets called \emph{flags} from a~finite poset $V$ such that all \emph{maximal} by inclusion flags have size $D$, and every flag 
\[
\{x_1 < \dots < x_{i-1} < x_{i+1} < \dots < x_{D}\}
\]
is contained in exactly $n_i$ maximal flags. Since all maximal flags have the same size, $V$ is a~graded poset with $D$ levels $V_1,\dots,V_D$, where $V_i = V_i(X)$ is the set of $x\in V$ such that $i$ is the largest size of a~flag containing $x$ as its maximal element. It is very convenient for us to view $X$ as the subposet of $\hat{V}_1\times\dots\times\hat{V}_D$, identifying every flag $F\subseteq V$ with the~tuple $(x_1,\dots,x_D)$, where $x_i = v$ if $F\cap V_i = \{v\}$, and $x_i=*$ otherwise. 

For example, denote by $V$ the~set of vertices in the complete $D$-partite graph $K_{n_1,\dots,n_D}$, where the $i$-th part $V_i$ contains $n_i$ vertices. If we define the order in $V$  as $x < y$ \Iff $x\in V_i$, $y\in V_j$, and ${i < j}$, then the complex $\Delta(V)$ is just the \emph{clique complex} of $K_{n_1,\dots,n_D}$, i.e., its faces are the cliques in this graph. Hence, we see that such complexes $\Delta(V)$ are exactly the coded spaces ${\hat{V}_1\times\dots\times\hat{V}_D}$ of product codes (see Section~\ref{sc:tanner-sheaf}).

Another important example is when the poset $V$ is the set of all proper subspaces  $\cV\le\F_q^{D+1}$, $1\le \dim \cV \le D$, naturally ordered by inclusion, where we denote $\Delta(V)$ by $A_{D-1}(\F_q)$ and call the \emph{flag complex}\footnote{Such complexes are often called \emph{flag varieties}.} of $\F_q^{D+1}$. We see that $\dim A_{D-1}(\F_q) = D-1$, and, in particular, one can easily recognize that $A_1(\F_q)$ is the well-known graph of points versus lines in the projective plane $\PP^2_q$ over $\F_q$.  

Already in~\cite{Tanner:1981}, Tanner proposed a~code on the graph $A_1(\F_q)$ called the \emph{field plane hexagon} (\emph{FPH}) code. Another example of a~code on $A_1(\F_q)$ is the code defined by Rodier in~\cite{rodierCodesFlagVarieties2003}, as a~polynomial evaluation code on the set of pairs $(\alpha,\beta)\in \PP^2_q\times \PP^2_q$ such that $\dtp{\alpha}{\beta} = 0$. The last code can also be considered as the~punctured product code initially defined on \emph{all} pairs $(\alpha,\beta)\in \PP^2_q\times \PP^2_q$, where we then puncture the symbols with $\dtp{\alpha}{\beta}\ne 0$. Here we adopt a similar approach. 

Consider some $(n_1,\dots,n_D)$-regular flag complex $X$, and chose $D$ codes $\cC_i \le \F^{V_i}$ over $\F=\F_{2^t}$, $i\in [D]$, where $V_i = V_i(X)$. Let us define the sheaf code $\cF=\cF(X;\cC_1,\dots,\cC_D)$ on $X$, called a~\emph{flag product} (\emph{FP}) \emph{code}, with local codes $\cF_\sigma = \cC|_{X_\sigma}=\{c|_{X_\sigma}\mid c\in \cC\}$ obtained from $\cC = \bigotimes_{i\in [D]} \cC_i\le \F^{V_1\times\dots\times V_D}$, $\sigma\in X$, where  $\cC_1,\dots,\cC_D$ are called the \emph{component codes} of~$\cF$. 
Each component code $\cC_i$ is an~$[N_i,k_i]$ code over $\F$, where $N_i=\abs{V_i}$.
The~poset $X$ has the greatest element $\sigma=(*,\dots,*)$, hence $\cF_\sigma$ coincides with the global code $\cF(X) =\cC|_{X_*}$, where $X_*$ is the set of all maximal flags. We can see that when $X$ is the clique complex of the complete $D$-partite graph, then $\cF$ is the standard product code $\cC\!=\!\bigotimes_{i\in[D]}\cC_i$, while in the general case it is obtained from $\cC\le \F^{V_1\times\dots\times V_D}$ by puncturing the indices $\sigma\not\in X_*$.

If every $\cC_i$ is an~MDS code, and $k_i=\dim \cC_i \le n_i$, then we have a~very simple formula for the dimension
\begin{equation}\label{eq:fp-dim}
    \dim \cF = k_1\cdots k_D.
\end{equation}
Indeed, since $\dim \cC = k_1\cdots k_D$, we just need to show that $\dim \cC = \dim \cC|_{X_*}$, i.e., the dimension is preserved after the puncturing. However, since all component codes are MDS and $k_i\le n_i$, we can uniquely recover $N_i - n_i$ punctured symbols for each level-$1$ local code $\cC_\sigma$.

Now if $\brC = \bigotimes_{i\in [D]}\brC_i$ is the generic product code with the standard polynomial parity-check matrix $H$ over $\F_2[\vv]$ described in the previous subsection,  where $\brC_i$ is the generic code on the default coded space $\hat{V}_i$, then we can also define a~generic FP code $\brF$ on the flag complex $X$ if for each $\sigma\in X$ we can find a~polynomial parity-check matrix $H_\sigma$ such that $\ker H_\sigma = \brC|_{X_\sigma}$. We can do this if we first find a~polynomial generator matrix $G$ for $\brC=\ker H$ using Lemma~\ref{lm:par-gen}. Then we see that $G_\sigma = G|_{X_\sigma}$ is a~polynomial generator matrix for $\brC|_{X_\sigma}$, and, finally, we can also find a~polynomial parity-check matrix $H_\sigma$ for it using Lemma~\ref{lm:par-gen} once again\footnote{In this case, we apply Lemma~\ref{lm:par-gen} to the dual code $(\brC|_{X_\sigma})^\perp$.}. It is clear that since we used Lemma~\ref{lm:par-gen} twice, we have ${\deg H_\sigma \le \abs{X}^2\deg H \le \abs{X}^2}$, where $\abs{X}=(N_1+1)\cdots (N_D+1)$, and from Theorem~\ref{th:main} we have that there exists a~maximally extendable FP code over $\F_{q}$ if $q > |X|^4 2^{\abs{X}}$. Moreover, since in the generic product code $\brC$ all component codes $\brC$ are MDS, then its dimension is given by (\ref{eq:fp-dim}), which is also true for all its proper instances, including maximally extendable FP codes.

Let us consider, as an~example, maximally extendable FP codes $\cF=\cF(X;\cC_1,\cC_2)$ on $X=A_1(\F_p)$, where $p$ is an~odd prime number, and $\dim \cC_i = (p+1)/2$, ${i\in [2]}$. The component codes $\cC_1$ and $\cC_2$ are $[p^2+p+1, (p+1)/2]$ codes, and therefore $\dim \cF = (p+1)^2/4$.  This example may be relevant in the context of Question~5.7 from~\cite{dinurGoodLocallyTestable2022} asking whether one can define LTCs on $A_1(\F_p)$. 

\begin{remark}
    Though we do not discuss it here, it is also possible to define generic codes for FHP codes proposed by Tanner~\cite{Tanner:1981}, and show, using Theorem~\ref{th:main}, the existence of maximally extendable FHP codes.
\end{remark}

\section{Operations with Sheaf Codes}\label{sc:oper-sheaf}

One can naturally apply the standard operations with cellular sheaves~\cite{hansenSpectralTheoryCellular2019}, \cite{firstGoodQueryLocally2023} such as product, pullback, etc. to sheaf codes. Below we show how they can be adapted.

Let us start from a~very simple and basic operation. For a given sheaf code $\cF$ on $X$, we can define its \emph{restriction} $\cF|_U$ to a~subposet $U\subseteq X$, as the sheaf code on the space $U$ with the global code $\cF(U)$ and the local codes $\cF_\sigma$, $\sigma\in U$. Since every set $X_{\ge\sigma}$ is open, we can identify the local code $\cF_\sigma = \cF(X_{\ge\sigma})$ with the sheaf code $\cF|_{X_{\ge\sigma}}$.

Now, let us show how to glue together several small sheaf codes into the larger one. Suppose we have a~poset $X$ with a~fixed \emph{open cover}, i.e., with a~family of open sets $\{U_i\}_{i\in I}$ such that $X=\cup_{i\in I} U_i$. Now, if for every $i\in I$ we have a~sheaf code $\cF_i$ on $U_i$, we can define the \emph{union} $\cup_{i\in I} \cF_i$ as the sheaf code $\cF$ on $X$, where for every ${\sigma\in X}$ we put\footnote{Recall that  $\sum_{i\in I}\cC_i = \{\sum_{i} c_i \mid c_i\in \cC_i, i\in I\}$.} $\cF_\sigma = \sum_{i\in I} (\cF_i)_\sigma$. The definition is correct since for $\sigma\le\tau$ if $c= \sum_{i\in I} c_i\in \cF_\sigma$, where $c_i\in (\cF_i)_\sigma$, then $c|_{X_\tau} = \sum_{i\in I} c_i|_{X_\tau}\in \cF_\tau$. Note that when all codes are \emph{compatible}, i.e., $\cF_i|_{U_i\cap U_j} = \cF_j|_{U_i\cap U_j}$, $i,j\in I$, then it is clear that $\cF$ is compatible with all of them, i.e., $\cF|_{U_i} = \cF_i$ for $i\in I$.  Moreover, if $X=\sqcup_{i\in I} U_i$, then $\cF(X) = \bigoplus_{i\in I} \cF_i(U_i)$.
It is also clear we have the following easy to verify lower bound for the minimal distance $d(\cup_{i\in I}\cC_i)\ge \min_{i\in I} d(\cC_i)$.

Next, let us move to more interesting operations.
\begin{definition}[product of sheaf codes]
The \emph{product} $\cA \otimes \cB$ of sheaf codes $\cA(X), \cB(Y)$ is the sheaf code $\cC$ defined on $X\times Y$ as $\cC_{\sigma\times \tau} = \cA_\sigma \otimes \cB_\tau \le \F^{X_\sigma\times Y_\tau}$; $\sigma\in X,\tau\in Y$.  
\end{definition}
The definition is correct since if $c\in \cA_\sigma \otimes \cB_\tau$ then $c|_{X_{\sigma'} \times Y_{\tau'}}\in \cA_{\sigma'} \otimes \cB_{\tau'}$ for every $(\sigma, \tau) \le (\sigma', \tau')$, which also shows us the restriction maps in this sheaf 
\begin{equation}
\cC_{(\sigma, \tau) \to (\sigma', \tau')} = w \mapsto w|_{X_{\sigma'} \times Y_{\tau'}}.  
\end{equation}

It is easy to check that the global code $\cC(X\times Y)$ is exactly the tensor product code $\cA(X) \otimes \cB(Y)$ of the global codes $\cA(X), \cB(Y)$, which justifies the notation and the name we used here. 

Now let us consider two general operations that allow one to obtain large codes out of smaller ones and vice versa. For an~arbitrary map $\phi\colon X\to Y$, we call the~code 
\[
\phi^* \cB = \{b\circ\phi\in \F^X \mid b\in \cB\}
\]
the~\emph{pullback} of $\cB\le \F^Y$. Similarly, we call the code
\[
\phi_* \cA = \{b\in \F^Y \mid b\circ\phi \in \cA\}
\]
the \emph{pushforward}\footnote{The pullback and the pushforward are also called the \emph{inverse} and the \emph{direct image}, respectively.} of $\cA\le \F^{X}$.

Most often, $\phi$ is the natural projection $\phi\colon X\to X/{\sim}$ for some equivalence relation $\sim$ on a~set~$X$, in which case we usually call $\cB$~the \emph{quotient} of $\cA$. The most important for us case is when $X/{\sim} = X/G$ is the orbit space for an~action of a~group~$G$ on $X$ (see Appendix~\ref{sc:action}). 

Let us show how we can adapt all this for sheaf codes.
Given graded posets $X$ and $Y$, we say that $\phi\colon X\to Y$ is a~\emph{morphism} if for every $\sigma\le \tau$ we have $\phi(\sigma) \le \phi(\tau)$ and $\dim \phi(\sigma) = \dim \sigma$. 
\begin{definition}[pullback of sheaf codes]
If $\phi\colon X\to Y$ is a~morphism and $\cB$ is a~sheaf code on $Y$, the \emph{pullback} is the~sheaf code $\phi^*\cB$ on $X$, where for every $\sigma\in X$ we put\footnote{Note that $\phi(X_{\sigma}) = Y_{\phi(\sigma)}$ since $\phi$ is a~morphism.} $(\phi^*\cB)_\sigma = \{b\circ\phi|_{X_\sigma} \in \F^{X_\sigma} \mid c \in \cB_{\phi(\sigma)}\}$.
\end{definition}

Here the definition is correct since for every $\sigma\le \tau$ if $b\circ\phi|_{X_\sigma}\in (\phi^*\cB)_\sigma$ where $ b \in \cB_{\phi(\sigma)}$ then we have $(b\circ\phi|_{X_\sigma})|_{X_\tau} = b\circ\phi|_{X_\tau} = b|_{Y_{\phi(\tau)}}\circ\phi|_{X_\tau}\in (\phi^*\cF)_\tau$.  

Pullbacks are convenient when one starts from a~base code $\cB$ defined on a~small complex $X$ and then \emph{lifts} it to a~large code $p^*\cB$ defined on its $\ell$-fold covering $\hat{X}$ using the corresponding covering map $p\colon \hat{X} \to X$. This way, one can describe the \emph{classical} lifted product codes from~\cite{Panteleev&Kalachev:2021:ltc, dinurGoodQuantumLDPC2023} (see also \cite{firstGoodQueryLocally2023} for different examples). 

Another example is a~Tanner graph code~\cite{Tanner:1981} defined on a~$\Delta$-regular graph $\Gamma$ with some fixed labelling map $\lambda\colon E(\Gamma)\to [\Delta]$, where we assign to every vertex a~small base code $\cB\in\F^\Delta$. Here, we can define the code as the pullback $\phi^*\cB$, where we consider $\cB$ as sheaf code on the default coded space $\widehat{[\Delta]}$, and the morphism $\phi\colon \Gamma\to \widehat{[\Delta]}$ is defined as $\phi(v) = *$ and $\phi(e) = \lambda(e)$ for ${v\in V(\Gamma)}$, ${e\in E(\Gamma)}$. This more general idea is also often called a~\emph{lifting} of the base~code $\cB$~(e.g., see \cite{LiftingSmallLocally2019}).     

\begin{definition}[pushforward of sheaf codes]
If $\phi\colon\! X\to Y$ is a~morphism and $\cA$ is a~sheaf code on $X$, the \emph{pushforward} is the~sheaf code $\phi_*\cA$ on $Y$, where for every $\sigma\in Y$ we put
\[(\phi_*\cA)_\sigma = \{b \in \F^{Y_\sigma} \mid b\circ\phi|_{\phi^{-1}(Y_\sigma)} \in \cA({\phi^{-1}(Y_{\ge\sigma})})\}.\]
\end{definition}
The definition is correct since for $\sigma\le\tau$ we have $Y_{\ge\tau} \subseteq Y_{\ge\sigma}$, and thus $\cA({\phi^{-1}(Y_{\ge\sigma})}) \le \cA({\phi^{-1}(Y_{\ge\tau})})$.
An important example of the pushforward is the \emph{quotient} of sheaf codes, invariant under a~group $G$ acting on the coded space $X$. Here we have a~natural morphism $\phi\colon X\to X/G$, and thus can define the quotient $\cA/G$ as $\phi_*\cA$. We give this definition in Appendix~\ref{sc:action}, where we also define balanced products of sheaf codes, which can be viewed as the~quotient of the product of sheaf codes. 

Yet another very interesting operation can be applied when $\cF$ is a~Tanner sheaf code,
in which case, we can define its \emph{dual sheaf code} $\cF^\T$ as code on the same coded space $X$, where all level-$1$ codes are replaced by their duals, i.e., we have $(\cF^\T)_\sigma = \cF_\sigma^\perp$ for all $\sigma \in X$ with $\dim \sigma = \dim X - 1$. Such codes naturally arise when one considers the codes $\ker H^\T$, where $H$ is a parity-check matrix of a~Tanner code~\cite{Panteleev&Kalachev:2021}, \cite{Breuckmann:balanced:2021} (see also~\cite{firstGoodQueryLocally2023, dinurNewCodesHigh2023, dinurqLTC:2024}).
It is not hard to verify that  $(\cF^\T)^\T = \cF$ and $(\cA\otimes\cB)^\T = \cA^\T \otimes \cB^\T$. Moreover, if $\cF$ is a~\emph{default sheaf code}, i.e., its coded space is a~default space $X=\hat{S}$, then for the global code we have $\cF^\T(X) = \cF^\perp(X)$. For the product code, considered as a~Tanner sheaf code, we clearly have $(\bigotimes_{i\in[D]} \cC_i)^\T = \bigotimes_{i\in[D]} \cC_i^\perp$.

In sheaf theory there is a~dual notion to sheaf called \emph{cosheaf} where we also assign vector spaces $\cF(U)$ to open sets $U\subseteq X$, but for every pair of nested open sets $U \subseteq V$ instead of a~restriction map we define an~\emph{extension map} $\cF_{U \to V}\colon \cF(U)\to \cF(V)$. Thus we can also consider cosheaves of linear codes. If $\cF$ is a~sheaf code on $X$, we can define its \emph{dual cosheaf code} $\cF_\sigma^\perp$ obtained from $\cF$ if we replace all local codes $\cF_\sigma$ by their duals $\cF_\sigma^\perp$. Here we have an~extension map $c\mapsto c|^{X_\sigma}$, where for every $\sigma\le \tau$ we extend a~codeword $c\in\F^{X_\tau}$ by zeros to the codeword $c'=c|^{X_\sigma}\in\F^{X_\sigma}$ such that $c'(x)=c(x)$ if $x\in X_\tau$, and $c'(x)=0$ otherwise. An~example of a~cosheaf code is the dual product code $\cA_1\boxplus \dots \boxplus \cA_D$ defined as $(\cA_1^\perp \otimes \dots\otimes \cA_D^\perp)^\perp$.

\section{Cohomology of Sheaf Codes}\label{sc:cohom}

Since every sheaf code $\cF$ comes together with its coded space $X$, it is possible to study their cohomology groups, denoted as $H^i(\cF)$ or $H^i(X; \cF)$, and call the \emph{cohomology groups  of $X$ with coefficients in $\cF$} (e.g., see~\cite{hansenSpectralTheoryCellular2019}). In the special case of constant sheaf codes $\underline{\F}(X)$, this is exactly the standard cohomology $H^i(X; \F)$ of $X$ with coefficients in $\F$. 

Though for arbitrary posets $X$ one can use the \emph{\v{C}ech cohomology}~\cite{currySheavesCosheavesApplications2014} to define the cohomology group of $X$, here we restrict ourselves to the simpler case~\cite{hansenSpectralTheoryCellular2019} when $X$ is the \emph{cell poset}\footnote{Such posets are often called \emph{CW posets}~\cite{bjorner1984posets}.}, i.e., the poset of cells of a~regular cell complex (e.g., simplicial or cubical complex), where the cells are naturally ordered by inclusion. All cell posets have a~very nice property~\cite[Fig.~1]{bjorner1984posets} that the number of $i$-cells between every $(i-1)$-cell and $(i+1)$-cell is either $0$ or $2$.  
Let $\cF$ be a~sheaf code on a~cell poset $X$ over a~field $\F$ of characteristic\footnote{For simplicity here we consider only the case $\ch \F = 2$. In the general case, the definition of coboundary maps is essentially the same, but we need to take into account the orientation of $X$.}~$2$. 
Consider the alternating sequence of vector spaces over $\F$ and $\F$-linear maps called a~\emph{cochain complex} 
\[
\cdots\rightarrow\bC^{i-1} \xrightarrow{\delta^{i-1}}  \bC^i \xrightarrow{\delta^{i}}  \bC^{i+1} \rightarrow \cdots 
\]
where\footnote{One can identify $\bigoplus_{i\in I}\cC_i$ with $\{(c_i)_{i\in I} \mid  c_i\in\cC_i, i\in I \}$, and if $I=\varnothing$ we assume that we get the zero vector space $\bz$.} $\bC^i = \bigoplus_{\sigma\in X(i)} \cF_\sigma$, $i\in\Z$, and the $\F$-linear maps, called \emph{coboundary maps}, are defined as
\[
\delta^i (c_{\sigma}\cdot\sigma) = \sum_{\sigma \prec \tau} (c_{\sigma}|_{X_\tau})\cdot\tau, 
\]
and extended by linearity.
We represent the elements of the vector space $\bC^i = \bC^i(X; \cF)$ as formal linear combinations $\sum_{\sigma\in X(i)} c_\sigma\cdot \sigma$, where $c_\sigma\in \cF_\sigma$. Since the natural restriction maps $\cF_{\sigma\to\tau}\colon x\mapsto x|_{X_\tau}$ satisfy the condition $\cF_{\tau\to\pi} \circ \cF_{\sigma\to\tau} = \cF_{\sigma\to\pi}$  (see also Definition~\ref{df:cellular-sheaf}) and the cell poset $X$ has the nice property mentioned above, it is straightforward to check (e.g., see~\cite{hansenSpectralTheoryCellular2019}) that $\delta^{i}\circ\delta^{i-1} = 0$. This implies that $\im \delta^{i-1} \le \ker\delta^{i}$, and allows one to define the \emph{$i$-th cohomology group} $H^i(\cF) = H^i(X; \cF)$ as $\bZ^i/\bB^i$, where $\bB^i = \im \delta^{i-1}$ and $\bZ^i = \ker\delta^{i}$ are called respectively the spaces of \emph{$i$-cocycles} and \emph{$i$-coboundaries}. We also want to note that the cochain complex can be viewed as \emph{one} large vector space $\bC^\bullet = \bigoplus_{i} \bC^i$ with \emph{one} coboundary map $\delta\colon \bC^\bullet\to\bC^\bullet$ such that $\delta^i = \delta|_{\bC^i}$.

The cohomology groups $H^i(\cF)$ allow one to define \emph{quantum} codes out of classical sheaf codes $\cF$. For example, if we fix a~generator matrix $G$ for the global code $\cF(X)$ we can equip with bases $\tilde{\cF}_\sigma$ all local codes $\cF_\sigma$ if we define $\tilde{\cF}_\sigma$ to be the set of first linearly independent rows in the matrix $G|_{X_\sigma}$. Therefore, every element of $\bC^i$ is uniquely represented as a~vector of length $\sum_{\sigma\in X(i)} \dim \cF_\sigma$, and the quotient $H^i(\cF) = \bZ^i/\bB^i$ can be viewed as a quantum CSS code of this length.

\section{Expansion of Sheaf Codes}\label{sc:exp-sheaf-codes}

If we equip the cochain complex $\bC^\bullet = \bC^\bullet(X;\cF)$ with a~norm, we can naturally transfer all the standard notions, such as coboundary expansion and local minimality, from theory of high-dimensional expanders (HDXs)~\cite{Kaufman:2014, Lubotzky:2018}, originally developed for cohomology groups over $\F_2$ to the much more general realm of cohomology groups with local coefficients in the~sheaf~$\cF$. 

Here we use the \emph{block} (\emph{Hamming}) \emph{norm}\footnote{This norm is also sometimes called the \emph{support norm}~\cite{firstGoodQueryLocally2023}.} from~\cite{Panteleev&Kalachev:2021:ltc}, naturally generalizing the standard Hamming norm, defined for $c\in \bC^\bullet$ as $\abs{c}_X = \abs{\{\sigma\in X\mid c_\sigma \ne 0\}}$, and also consider its normalized version\footnote{This normalization is often used for simplicial complexes~\cite{Lubotzky:2018}.} as $\norm{c}_X = \sum_{c_\sigma\ne 0} w_{\sigma}$, where $w_\sigma = \abs{X_\sigma}/\sum_{\sigma\in X(i)}\abs{X_\sigma}$, $\dim \sigma = i$. In fact, in all examples we are interested, the poset $X$ is sufficiently regular, and  $w_\sigma = 1/\abs{X(i)}$ for $\sigma\in X(i)$. Both these norms can be naturally extended to subspaces $S\le \bC^\bullet$ as $\abs{S}_X = \min_{\sigma\in S} \abs{\sigma}_X$ and $\norm{S}_X = \min_{\sigma\in S} \norm{\sigma}_X$. Now, we are ready to define the \emph{$i$-th coboundary expansion} $\h^i(\cF) = \h^i(X;\cF)$ of $\cF$ as 
\begin{equation}
    \h^i(X;\cF) = \min_{c\in \bC^i\setminus \bB^i} \frac{\abs{\delta^{i} c}_X}{\abs{c + \bB^i}_X},
\end{equation}
and also its \emph{normalized} version $\tilde{\h}^i(\cF) = \tilde{\h}^i(X; \cF)$, where instead of $\abs{\cdot}_X$ we use $\norm{\cdot}_X$. Here we assume that the minimum over the empty set is equal to $\infty$, and also put $\h(\cF) = \min_{i\in\Z} \h^{i}(\cF)$ and $\tilde{\h}(\cF) = \min_{i\in\Z}\tilde{\h}^i(\cF)$. Note that it is also possible to define the \emph{small set} versions~\cite{hopkinsExplicitLowerBounds2022a} of all these quantities: 
\begin{equation}
    \h^i_{a}(X;\cF) = \min_{\substack{c\in \bC^i\setminus \bB^i\\ \abs{c}\le a}} \frac{\abs{\delta^{i} c}_X}{\abs{c + \bB^i}_X}.
\end{equation}

In the special case when $\cF$ is the constant sheaf $\underline{\F_2}$ on a~simplicial complex $X$, we get exactly the coboundary expansion from the theory of HDXs, which in turn generalizes the Cheeger constant $h(X)$ (e.g., see~\cite{Lubotzky:2018, firstGoodQueryLocally2023}) of a~graph $X$ since we have $h(X) = \h^0(X; \underline{\F_2})$ when we view graphs as simplicial complexes\footnote{This assumption is very important here since simplicial complexes also have the empty face $\varnothing$ with $\dim \varnothing = -1$ (e.g., see \cite{Lubotzky:2018}).}.

At the same time, when $\cF$ is a~default sheaf code of length $n$, then $\h(\cF)$ and $\tilde{\h}(\cF)$ coincide respectively with the minimum distance $d(\cF)$ and its normalized version $\frac{1}{n} d(\cF)$. In the case of tensor product code $\cA \otimes\cB$, we have that $\h(\cA \otimes\cB)$ is, up to normalization factors, the \emph{agreement testability}~\cite{dinurHighDimensionalExpanders2017}, which is in turn related with the \emph{robust testability}~\cite{Ben-Sasson:2006} of $\cA \otimes\cB$ (e.g., see~\cite{dinurGoodQuantumLDPC2023, kalachevTwosidedRobustlyTestable2023}). 

The quantity $\h(\cA \otimes\cB)$ was an important ingredient in the first constructions of good LTCs and qLDPC codes~\cite{Dinur:2021, Panteleev&Kalachev:2021:ltc}, where it played a~role similar to the minimal distance in the Sipser-Spielman codes.  
Note, that in~\cite{Panteleev&Kalachev:2021:ltc} it was called \emph{product expansion}, and also considered in the case when we have some puncturing in the product code $\cA\otimes\cB$. Later Leverrier and Z{\'e}mor found an~elegant equivalent reformulation of the product expansion, and finally in \cite{dinurGoodQuantumLDPC2023, kalachevTwosidedRobustlyTestable2023} it was directly connected to the (co)boundary expansion, and also generalized to $D>2$ dimensions. It is important to note that when $D>2$ it is a~strictly \emph{stronger} property than both the agreement and the robust testability~\cite{Kalachev:example:2023}. Finally, in~\cite{firstGoodQueryLocally2023} the authors studied coboundary expansion for arbitrary sheaves of vector spaces on simplicial complexes, and also used it to construct LTCs and qLDPC codes.

Let us also say that a~sheaf code $\cF$ is an~\emph{$\eta$-expander} if $\eta(\cF)\ge \eta$, and a~\emph{two-way $\eta$-expander} if both $\cF$ and $\cF^\T$ are $\eta$-expanders, where $\cF^\T$ is the dual sheaf code. Note that if $\cF=\bigotimes_{i\in[D]}\cC_i$, then two-way $\eta$-expansion requires that both $\bigotimes_{i\in[D]}\cC_i$ and $\bigotimes_{i\in[D]}\cC_i^\perp$ are $\eta$-expanders, and when $D=1$ it requires that the minimal distance of a~code and its dual is at least $\eta$. The case $D=1$ was essential for a~construction of Sipser-Spielman codes with good bidirectional small-set expansion properties in~\cite{Panteleev&Kalachev:2021}, \cite{Breuckmann:balanced:2021}. At the same time, the case $D=2$ was essential for expander LP codes~\cite{Panteleev&Kalachev:2021:ltc} and their simplification called \emph{quantum Tanner codes}~\cite{leverrierQuantumTannerCodes2022a} to get the linear minimum distance, and also good bidirectional small-set expansion~\cite{hopkinsExplicitLowerBounds2022a}, which in turn implies the NLTS theorem~\cite{anshuNLTSHamiltoniansGood2023}.

\section{Conclusions}

We proposed a~general framework of sheaf codes, inspired by the Dinur-Kauffman sheaf theoretic approach to property testing~\cite{dinurHighDimensionalExpanders2017} and the Meshulam's idea~\cite{Meshulam:2018} to define Tanner codes on simplicial complexes as local systems of coefficients. This general framework complements the existing literature on the codes defined on various types of high-dimensional expanders~\cite{Dinur:2021, Panteleev&Kalachev:2021:ltc, kalachevTwosidedRobustlyTestable2023, firstGoodQueryLocally2023, dinurNewCodesHigh2023, dinurqLTC:2024}, and provides a~topological interpretation of such codes as sheaves on finite topological spaces, or more generally on Alexandrov topological spaces. Though sheaf codes can also be viewed as cellular sheaves of vector spaces~\cite{currySheavesCosheavesApplications2014, hansenSpectralTheoryCellular2019, firstGoodQueryLocally2023} or, equivalently, local systems on posets~\cite{Panteleev&Kalachev:2021:ltc}, considering them as sheaves on topological spaces allows us to study new properties such as maximal extendibility, which is in turn closely related with maximally recoverable codes~\cite{chenMaximallyRecoverableProperty2007, gopalanExplicitMaximallyRecoverable2014}. Note also that the problem of extension to global sections, which we study here, also naturally appears in the context of network coding~\cite{NetSheaf:2011}. 

We showed that in every class of sheaf codes defined on the same space $X$ and admitting a~polynomial parametrization of the parity-check matrices for the local codes, there always exists a~maximally extendable code over a~sufficiently large finite field. In particular, this implies the existence of maximally extendable product codes with \emph{arbitrary} number of component codes. This is important since such codes have good coboundary expansion properties~\cite{kalachev:2024}, and, as it was suggested in~\cite{kalachevTwosidedRobustlyTestable2023}, 
tensor products of more than two codes with good coboundary expansion could potentially be used to construct multidimensional versions of the codes from~\cite{Panteleev&Kalachev:2021:ltc, leverrierQuantumTannerCodes2022a, dinurGoodQuantumLDPC2023}, and thus get good qLTCs. 

In the~recent result~\cite{dinurqLTC:2024}, it is shown that using a~natural $4$-dimensional generalization of the codes from~\cite{Dinur:2021, Panteleev&Kalachev:2021:ltc} one can get \emph{almost} good qLTCs provided that the product of the four local codes has good two-way coboundary expansion. 
However, we believe that to get good qLTCs it is required to consider significantly different complexes than the ones used in~\cite{Dinur:2021, Panteleev&Kalachev:2021:ltc, dinurqLTC:2024}.   
We think that an interesting future research direction is to use the Ramanujan cubical complexes from~\cite{rungtanapiromInfiniteSeriesQuaternionic2019} obtained as the~quotient $X=T_1\times\dots\times T_D/G$ of a~product of $D$ infinite regular trees $T_1,\dots, T_D$, where $G$ is a~group acting on this product. We conjecture that with an~appropriate choice of local product codes\footnote{Tensor product of the \emph{projective} Reed-Solomon codes could be a~good choice to get good LTCs using these complexes since the automorphism groups of such codes contain the local permutation groups of the complexes from~\cite{rungtanapiromInfiniteSeriesQuaternionic2019}. Unfortunately, a~dual of a~projective Reed-Solomon code is still a~projective Reed-Solomon code, and such codes can not be used~\cite{kalachevTwosidedRobustlyTestable2023} in the current constructions of good qLDPC codes. However, we believe that if one can find maximally expandable product codes, where component codes are also stable under such permutations than they can be used to construct good qLTCs.}, it is possible to define a~sheaf code $\cF$ on this complex $X$ and the obtained qLDPC code $H^2(X; \cF)$ could potentially give a~positive solution to the qLTC conjecture.

In Section~\ref{ssc:flag-code} we consider sheaf codes defined on flag complexes.
A~potential use of such codes is a~natural generalization of quantum color codes~\cite{color2006}, which we call \emph{Tanner color codes}, inspired by the idea of quantum Tanner codes from~\cite{leverrierQuantumTannerCodes2022a}. If we have a~Tanner sheaf code $\cF$ on a~$2$-dimensional $3$-partite simplicial complex $X$, we can consider its dual sheaf code $\cF^\T$ (see Section~\ref{sc:oper-sheaf}), where $\cF^\T_e = \cF_e^\perp$, for all $e\in X(1)$, and define the~quantum CSS code $Q(\cF)=C_X/C_Z^\perp$, with the stabilizer groups $C_X^\perp = \sum_{v\in X(0)} \cF_v$ and $C_Z^\perp = \sum_{v\in X(0)} \cF^\T_v$ generated  respectively by the codewords from $\cF_v$ and $\cF^\T_v$, ${v\in X(0)}$. Since for \emph{different} vertices $v,v'\in X(0)$ we have that either $X_v \cap X_{v'} = \varnothing$ or  $X_v \cap X_{v'} = X_{e}$, where $e=\{v,v'\}\in X(1)$, then $\dtp{c}{c'} = \dtp{c|_{X_e}}{c'|_{X_e}} = 0$ for all $c\in\cF_v$ and $c'\in\cF^\T_{v'}$ because $\cF^\T_e = \cF_e^\perp$. Now, when $v=v'$, since $X$ is $3$-partite, we have that the link $X_{\ge v}$ is a~$2$-partite graph with parts $E_0,E_1\subseteq X(1)$, and thus $X_{v} = \sqcup_{e\in E_0} X_e$. Hence, we still have $\dtp{c}{c'} = \sum_{e\in E_0} \dtp{c|_{X_e}}{c'|_{X_e}} = 0$ in this case, and the definition of $Q(\cF)$ is indeed correct. 

One can consider defining Tanner color codes $Q(\cF)$ on simplicial complexes from~\cite{dinurNewCodesHigh2023}, \cite{Kaufman:2014}, \cite{kaufmanConstructionNewLocal2018} since they have good expansion properties and their links are flag complexes. It is interesting whether we get good qLDPC codes if the local codes $\cF_\sigma$ are the codes discussed in Section~\ref{ssc:flag-code}. It is natural to expect that, as in the case of expander LP codes, we get linear distance only if the local codes $\cF_\sigma$, $\sigma\in X(0)$, are good two-way expanders.

In the current paper, we focused on sheaves of classical codes. However, it is also possible to consider sheaves $\cQ$ of quantum CSS codes, where we assign to every open set $U\subseteq X$ the CSS code $\cQ(U)$ instead of a~classical linear code. It is interesting whether one can use this idea to analyze qLDPC codes.

\appendices

\enlargethispage{-1.2cm} 

\bibliographystyle{IEEEtran}
\bibliography{codes.bib}

\section{General Sheaf Codes}\label{sc:gen-sheaf}

\subsection{Alexandrov Topology and Cellular Sheaves}

The definitions from Section~\ref{ssc:topology} are very abstract, and they do not give any insight into how one can construct a~sheaf of codes on a~topological space. For this reason, we limit ourselves to \emph{Alexandrov spaces}, i.e., the topological spaces where the intersection of \emph{any} collection of open sets is always open. This is not restrictive since we are interested in finite topological spaces, and they are obviously Alexandrov spaces. If we further require that a~space satisfies the Kolmogorov $T_0$-axiom\footnote{A topological space $X$ is a~\emph{$T_0$-space} if the points can be \emph{distinguished} by open sets, i.e., for every two different points $x,y\in X$ there exists an open set $U$ such that either $x\in U$, $y\not\in U$  or $y\in U$, $x\not\in U$. This condition is also not very restrictive since every topological space $X$ can be converted into $T_0$-space as the quotient $X/\sim$ by the indistinguishability of points.}, then any Alexandrov space can be viewed as a~poset\footnote{Without $T_0$-axiom every Alexandrov space $X$ can be identified with \emph{preordered sets} if we define the \emph{preorder} as $a \le b$ \Iff $[\{a\}] \subseteq [\{b\}]$, where $[\cdot]$ is the closure in $X$. If $X$ is a~$T_0$-space, then this preordered set is a~poset.} $X$, where the topology is given by the \emph{upper sets}, i.e., a~set $U\subseteq X$ is open \Iff $x\in U$ and $x\le y$ implies $y\in U$. 

It is clear that the open sets $X_{\ge \sigma } = \{x \in X\mid x \ge \sigma \}$ (the smallest open neighborhood of $\sigma\in X$) form the base of this topology. 
The main idea of \emph{cellular sheaves} is to assign the sets of local sections $\cF_{\sigma} = \cF(X_{\ge\sigma})$ called \emph{stalks} only to the open sets $X_{\ge\sigma}$ from this base, and then extend $\cF(U)$ to \emph{all} open sets $U = \cup_{\sigma\in U} X_{\ge\sigma}$.  Topologically, stalks $\cF_\sigma$ capture local properties\footnote{Every sheaf $\cF$ on a~space $X$ can also be considered as a~topological space $E_\cF$ called its \emph{{\'e}tale space},  a~generalization of a~covering of $X$ with the stalks $\cF_x$ playing the role of fibers. The elements of $E_\cF$ are all possible local sections $s\in \cF_x$ from stalks $\cF_x$, $x\in X$, and the open sets of $E_\cF$ are all compatible sets $\{s_i\}_{i\in I}$ of such local sections. We can also define a~map $p\colon E_\cF\to X$ (a~generalization of a covering projection), assigning $s\in \cF_x$ the element $x\in X$. It is not hard to show that the \emph{local sections} of $p$, i.e., the functions $s\colon U\to E_\cF$ such that $p\circ s = \id_U$, are in one-to-one correspondence with the elements of $\cF(U)$, which explains why these elements are called ``local sections''.} of the sheaf $\cF$ at points $\sigma\in X$. The adjective ``cellular'' here suggests that our model example is a~cell complex $X$ such as a~graph, a~simplicial, or a~cubical complex. 

These observations motivate the following definition~(see~\cite{hansenSpectralTheoryCellular2019} for further details). 

\begin{definition}[cellular sheaf, local system]\label{df:cellular-sheaf}
    Given a~poset $X$ and a category $\cC$ (e.g., $\Ab$, $\Rng$, $\Vc_\F$, or $\LC_\F$), a~\emph{cellular sheaf} or \emph{local system} on $X$ with values in $\cC$ is a~function $\cF$ assigning to every $\sigma$ from $X$ an object $\cF_\sigma$ from $\cC$ and to every pair $\sigma\le \tau$  a~morphism $\cF_{\sigma\to \tau}\colon \cF_\sigma\to\cF_\tau$ from $\cC$ such that for every triple $\sigma\le \tau\le \pi$ we have that: 
    \[\cF_{\tau\to\pi} \circ \cF_{\sigma\to\tau} = \cF_{\sigma\to\pi} \text{ and } \cF_{\sigma\to \sigma} = \id_{\cF_\sigma}.\]
\end{definition}

Now, if the category $\cC$ is sufficiently nice\footnote{For example, it has all limits (see \cite{currySheavesCosheavesApplications2014}). Unfortunately, the category $\LC_\F$ does not have this property.}, we can easily extend $\cF(U)$ to all open sets $U = \cup_{\sigma\in U} X_{\ge\sigma}$ if define $\cF(U)$ to be the set of all \emph{compatible collections} $(s_\sigma)_{\sigma\in U}$ of local sections, i.e., where $\cF_{\sigma\to \tau}(s_\sigma) = s_\tau$ for every pair $\sigma \le \tau$ from $U$. If $\cC$ is $\Ab$ (resp., $\Rng$, $\Vc_\F$), the sets of compatible collections $(s_\sigma)_{\sigma\in U}$ can easily be shown to have a~structure of abelian group (resp., ring, vector space), which makes $\cF(U)$ an object from $\cC$. 

In the case of codes, it is convenient to assume that all codes $\cF_\sigma$ have their index sets $X_\sigma$ inside some fixed set $X_*$, and all the maps $\cF_{\sigma\to\tau}$ are the standard restriction maps $x\mapsto x|_{X_\sigma}$. This leads us to the definition of sheaf codes obtained from Tanner codes, given in Section~\ref{sc:tanner-sheaf} (see also \cite{dinurNewCodesHigh2023, dinurqLTC:2024}), which we use as our main definition of a~sheaf code throughout the paper. 

However, in this section, we give a~more general definition of a~sheaf code that works with arbitrary cellular sheaves of linear codes. The main problem, in the case of codes (i.e., $\cC$ is $\LC_\F$), that it is not a~good idea to define $\cF(U)$ just as the set $\tilde{\cF}(U)$ of concatenations $(s_\sigma)_{\sigma\in U}$ of pairwise compatible codewords $s_\sigma\in \cF_\sigma \le \F^{X_\sigma}$ since the imposed compatible constrains $\cF_{\sigma\to \tau}(s_\sigma) = s_\tau$ may imply that some symbols $s_\tau(a)$ and $s_\sigma(b)$ in \emph{different} codewords are \emph{always} proportional, i.e., $s_\tau(a) = \gamma s_\sigma(b)$ for some fixed $\gamma\in\F^{\times}$. 

This happens when in the matrix\footnote{For linear operators $\cL\colon \F^A\to\F^B$, here we denote its matrix $\bcL\in \F^{A\times B}$ using bold fonts to avoid confusion.} $\bcF_{\sigma\to\tau}\in \F^{X_\tau \times X_\sigma }$ we have $\bcF_{\sigma\to\tau} (a, b) = \gamma$, in which case we write $a \sim_U b$. Now, considering the equivalence relation generated\footnote{Here we mean the reflexive and transitive closure of the relation $\sim_U$, which we also denote by $\sim_U$.} by $\sim_U$ on $I = \sqcup_{\sigma\in U} X_\sigma$, we can choose a~set of representatives $X_U \subseteq I$ from each equivalence class and put 
\[\cF(U) = \{c|_{X_U} : c\in \tilde{\cF}(U)\},\] 
where the set $\hat{\cF}(U)$ of pairwise compatible collections of codewords $(s_\sigma)_{\sigma\in U}\in \F^I$ is the~linear code defined as
\[
\tilde{\cF}(U) = \{c \in\F^{I} : c|_{X_\sigma}\in\cF_\sigma, \sigma\in U, R_U c = 0\}.
\]
Here $R_U$ is the matrix of the set of homogeneous linear equations: $\cF_{\sigma\to \tau}(s_\sigma) - s_\tau$ = 0, $\sigma\le \tau$; $\sigma,\tau\in U$. Since we get $\cF(U)$ as a~projection of the~larger linear code $\tilde{\cF}(U)$ on its subset of indices $X_U \subseteq I$, then $\cF(U)$ is also a linear code. Moreover, since $a \sim_U b$ implies that $s_\tau(a)=\gamma_{ab} s_\sigma(b)$ in $\tilde{\cF}(U)$ for some fixed $\gamma_{ab}\in\F^\times$, we see that different sets of representatives $X_U \subseteq I$ lead to monomially equivalent codes, and hence isomorphic in the category $\LC_\F$. It is also not hard to check that the sets $X_U \subseteq I$ are exactly the information sets of the linear code $\ker R_U$.

\section{Product Codes}\label{sc:lp-codes}

In this section, we consider tensor product codes, and their generalizations called the \emph{hypergraph product} and \emph{lifted product} codes.

\subsection{Classical Product Codes and Quantum Hypergraph Product Codes}

We start from the natural relation between the classical
tensor product codes and the quantum hypergraph product codes. \

Given two linear codes $\mathcal{A} \leqslant \mathbb{F}^{n_a}$ and
$\mathcal{B} \leqslant \mathbb{F}^{n_b}$, the ({\tmem{tensor}})
{\tmem{product}} code is the linear code defined as
\[ \mathcal{A} \otimes \mathcal{B} \!=\! \{ w \!\in\! \mathbb{F}^{n_a \times
   n_b} \mid \forall i,j : w (\cdot, j) \in
   \mathcal{A}, w (i, \cdot) \in \mathcal{B} \}, \]
consisting of all matrices where each column $w (\cdot, j)$ is from
$\mathcal{A}$ and each row $w (i, \cdot)$ is from $\mathcal{B}$.

If $A \in \mathbb{F}^{m_a \times n_a}$ and $B \in \mathbb{F}^{m_b \times n_b}$
are respectively some parity-check matrices\footnote{We
do not assume here that those matrices have full rank. } for the codes
$\mathcal{A}$ and $\mathcal{B}$, then since $\mathcal{A} \otimes
\mathbb{F}^{n_b} = \{ w \in \mathbb{F}^{n_a \times n_b} \mid Aw = 0 \}$ and
${\mathbb{F}^{n_a} \otimes \mathcal{B}}= \{ w \in \mathbb{F}^{n_a \times n_b}
\mid wB^{\Tau} = 0 \}$, we can also give this definition in a~more concise
form:
\begin{equation*}
  \mathcal{A} \otimes \mathcal{B} \assign \{ w \in \mathbb{F}^{n_a \times n_b}
  \mid Aw = 0, wB^{\Tau} = 0 \}, \label{eq:prod}
\end{equation*}
where we check that the {\tmem{vertical}} (resp., {\tmem{horizontal}})
syndromes collected in the matrix $v \assign Aw$ (resp., $u = wB^{\Tau}$)
are all equal to zero. Note that those syndromes have a lot of dependencies
since from $(Aw) B^{\Tau} = A (wB^{\Tau})$ immediately follows the equation
$Au = vB^{\Tau}$.

Using these observations, we can define the quantum CSS code $\tmop{HP} (A,
B) = \cC_X/\cC^\perp_Z$ called the {\tmem{hypergraph product}} ({\tmem{HP}}) code, where the
codewords from $\cC_X$ are all the pairs ${(u, v)\in \mathbb{F}^{n_a \times
  n_b} \times \mathbb{F}^{m_a \times m_b}}$, satisfying the equation $Au = vB^{\Tau}$, and the
\emph{degenerate} ones (i.e., from $\cC_Z^\perp$) are all possible syndromes $(wB^{\Tau}, Aw)$. In~the case of
hypergraph product codes, it is usually more convenient to consider the codes
$\tmop{HP} (A, B^{\Tau})$ instead of $\tmop{HP} (A, B)$, in which case we have
\begin{equation*}
  \tmop{HP} (A, B^{\Tau})\! \assign\! \frac{\{ (u, v)\! \in\! \mathbb{F}^{n_a \times
  n_b} \!\times\! \mathbb{F}^{m_a \times m_b} \mid Au = vB \}}{\{ (wB, Aw) \mid w
  \in \mathbb{F}^{n_a \times m_b} \}} . \label{eq:HP}
\end{equation*}
It is known~{\cite{Tillich&Zemor:2009}} that given a linear $[n, k, d]$-code
with a full rank parity-check matrix $H \in \mathbb{F}^{m \times n}$, the code
$\tmop{HP} (H, H^{\Tau})$ has the parameters $\llbracket n^2 + m^2, k^2, d
\rrbracket$.

\subsection{Lifted Product Codes}

Lifted product (LP) codes~\cite{Panteleev&Kalachev:2021} are a~natural generalization of classical product
codes and quantum hypergraph product codes~\cite{Tillich&Zemor:2009} discussed above. The key idea is
just to replace the field $\mathbb{F}$ in their definition by a~group
algebra\footnote{Note that LP codes can also be defined over
arbitrary algebra $R$, not necessary the group algebra, and they can also be
constructed from two chain complexes of arbitrary
length~{\cite{Panteleev&Kalachev:2021}} .} $R = \mathbb{F} [G]$, and then
consider codewords over $R$ as $| G |$ times larger codewords
over~$\mathbb{F}$ by representing each code symbol $\sum_{g \in G} a_g g \in
R$ by the block $(a_g)_{g \in G}$ of $| G |$ bits.

Given matrices ${A \in R^{m_a \times n_a}} $ and ${B \in R^{m_b
\times n_b}} $, we define the {\tmem{classical lifted product}} \ code as
\begin{equation}
  \mathcal{A} \otimes_R \mathcal{B} \assign \{ w \in R^{n_a \times n_b} \mid
  Aw = 0, wB^{\Tau} = 0 \},
\end{equation}
where the classical code $\mathcal{A} \otimes_R \mathcal{B}$ is uniquely
defined by the column code $\mathcal{A} \assign \{ w \in R^{n_a} \mid Aw = 0
\}$ and the row code $\mathcal{B} \assign \{ w \in R^{n_b} \mid wB^{\Tau} = 0
\}$, and it can also be described as
\[  \{ w \in R^{n_a \times n_b} \mid
   \forall i,j: w (\cdot, j) \in \mathcal{A}, w
   (i, \cdot) \in \mathcal{B} \} . \]

As in the case of hypergraph product codes, we define the {\tmem{quantum
lifted product}} code $\tmop{LP} (A, B)$ as the CSS code, where the codewords
are the pairs $(u, v)$ satisfying $Au = vB^{\Tau}$ and the degenerate ones are
all possible syndromes $(wB^{\Tau}, Aw)$. As usual, we are mostly interested
in the codes $\tmop{LP} (A, B^{\Tau})$ instead of $\tmop{LP} (A, B)$, in which
case they are defined as
\begin{equation}
   \frac{\{ (u, v) \in R^{n_a \times n_b}
  \times R^{m_a \times m_b} \mid Au = vB \}}{\{ (wB, Aw) \mid w \in R^{n_a
  \times m_b} \}}.
\end{equation}

Specializing this general definition to $1\times 1$ matrices, we get the \emph{generalized bicycle} (\emph{GB}) \emph{codes} from~\cite{Kovalev:2013}, also known as~\emph{2BGA codes}~\cite{linQuantumTwoblockGroup2023} in the case of arbitrary group~$G$:
\[ \tmop{LP} (a, b) \assign \frac{\{ (u, v) \in R \times R \mid au = vb \}}{\{
   (wb, aw) \mid w \in R \}}. \]

Such codes, as well as more general LP codes,  gain a~lot of attention recently since it was shown that they have similar performance to surface codes for the circuit level noise model~\cite{bravyiHighthresholdLowoverheadFaulttolerant2023, xuConstantOverheadFaultTolerantQuantum2023}, while having several times more logical qubits. Previously, this was only known for the much more simple depolarizing noise model~\cite{Panteleev&Kalachev:2019}.

Let us also briefly note that GB codes $\LP(a, b)$ have a~natural symmetry given by the~map $(u, v) \mapsto (\bar{v}, \bar{u})$, where $\overline{\sum_{i\in I} c_i\cdot g_i} = \sum_{i\in I} c_i\cdot g_i^{-1}$ is the \emph{antipode map}, that implies that $d_X(\LP(a, b)) = d_Z(\LP(a, b))$ and that we can do transversal Hadamard gates~\cite{Panteleev&Kalachev:2021, bravyiHighthresholdLowoverheadFaulttolerant2023}. 

Another interesting property of GB codes is that the Tanner graph of $\LP(a, b)$ coincides with the $1$-skeleton of the $4$-fold left-right Cayley complex $\cay_2(A, G, B)$, where $a = \sum_{x\in A}x$, $b=\sum_{x \in B} x$. These square complexes are at the heart of all currently known constructions of good qLDPC codes~\cite{Panteleev&Kalachev:2021:ltc, leverrierQuantumTannerCodes2022a, dinurGoodQuantumLDPC2023}. The codes in~\cite{Panteleev&Kalachev:2021:ltc, dinurGoodQuantumLDPC2023} are respectively the codes $\LP(Z_S(h_a), Z_S^\T(h_b))$ and $\LP(Z_S(h_a), Z_S(h_b))$, where 
\[ Z_S(h) =  \left(\begin{array}{ccc}
     1_G h^{(1)} & \ldots & 1_G h^{(\Delta)}\\
     s_1 h^{(1)} & \ldots & s_{\Delta} h^{(\Delta)}
   \end{array}\right)\in R^{2m\times\Delta} \]
is the parity-check matrix over $R=\F[G]$ of \emph{Z\'{e}mor code}~\cite{Zemor:2001}, a~variation of Sipser-Spielman code~\cite{Sipser:1996}, defined on a~double cover of a~Ramanujan Cayley graph $\cay(G;S)$, $S=\{s_i\}_{i\in [\Delta]}$, with the local code $\ker h$,  $h=(h^{(1)},\dots,h^{(\Delta)})\in\F^{m\times \Delta}$.

More examples of quasi-abelian and non-abelian LP codes can be found in~\cite{yangSpatiallyCoupledQDLPCCodes2023, borelloDihedralQuantumCodes2023}.

\section{Actions on Posets and Codes}\label{sc:action}
\subsection{$G$-sets and $G$-posets}
We say that a~group $G$ \emph{acts} on a~set $X$ and call $X$ a~\emph{$G$-set}, if we have two maps $x\mapsto g.x$ and $x\mapsto x.g$ named respectively the \emph{left} and the \emph{right} \emph{actions}\footnote{The \emph{composition} of two left (resp., right) actions is by definition the action $x\mapsto (g_2g_1).x$ (resp., $x\mapsto x.(g_1g_2)$.} of $G$ on $X$, such that $g.x = x.g^{-1}$ for all $x\in X$ and $g\in G$. If we do not explicitly mention what action we consider, it will always be the left action.
The \emph{quotient} of $X$ under the action of $G$, also known as the \emph{orbit space},  does not depend on which of two actions we choose, and is defined as $X/G = \{[x]_G \mid x\in X\}$, where $[x]_G = \{g.x \mid g\in G\} = \{x.g \mid g\in G\}$ is called the \emph{orbit} of~$x$. We also say that a~subset $S\subseteq X$ is \emph{$G$-invariant} if $g.S = \{g.x \mid x\in S\}=S$.

We say that a~poset $X$ is a~\emph{$G$-poset} if $G$ acts on $X$ (as a~set) and $a \le b$ implies $g.a \le g.b$ for all $g\in G$ (or equivalently $a.g \le b.g$). A~\emph{graded $G$-poset} is a~$G$-poset also respecting grading $\rho(g.x) = \rho(x)$.
Given a~graded $G$-poset $X$, we define its \emph{quotient} as the graded poset $X/G$ with the grading $\rho(A) = \rho(a)$ for $a\in A$ and the partial order defined as $A \le B$  \Iff there exist $a \in  A$ and $b\in B$ such that $a\le b$.  Let us show that $X/G$ is a~poset. The reflexivity $A\le A$ is trivial. If $A\le B$ and $B\le C$ then we get $a\le b$ and $b'\le c$ for $a\in A$, $b,b'\in B$, $c\in C$, and since there exists $g\in G$ such that $g.b = b'$ we have $g.a \le g.b \le c$, and hence $A\le C$. Finally, if $A < B$ and $B > A$, then we clearly have a~contradiction $\rho(A) < \rho(B)$ and $\rho(B) < \rho(A)$.

Let us recall that we consider graphs (resp. simplicial and cubical complexes) as graded posets $X$, and if $X$ is also a~graded $G$-poset, then we call it a~\emph{$G$-graph} (resp., simplicial or cubical \emph{$G$-complex}). 

\subsection{$G$-codes}
Given a~$G$-set $X$, we say that a~code $C\le \F^X$ is a~\emph{$G$-code} if $g.c = (c(g^{-1}.x))_{x\in X} \in \cC$ for every $g\in G$ (or equivalently $c.g = \rbr{c(g.x)}_{x\in X} \in \cC$), i.e., $G$-codes are the $G$-invariant subspaces $\cC\le \F^X$ for the group action $c\mapsto g.c$.  As~in the case of $G$-sets and $G$-posets, we can also define the~\emph{quotient} $\cC/G\le \F^{X/G}$ of a~$G$-code $\cC\le \F^X$ if we put 
\begin{equation}\label{eq:code-quo}
\cC/G=\{c \in \F^{X/G}\mid \rho_G(c)\in \cC\},    
\end{equation}
where the operator $\rho_G\colon c\mapsto \rbr{c([x]_G)}_{x\in X}$ \emph{repeats} $\abs{S}$ times each symbol $c(S)$, where $S = [x]_G$. Note that the above definition is very general, and it works even with \emph{nonlinear} codes. Moreover, the definition and the results from this section are still applicable even to codes $\cC\le\F^X$ where $X$ is a~$G$-set, but $\cC$ is \emph{not} $G$-invariant.    

From the above definition, it~is easy to see that we also have that $\cC/G=\{c \in \F^{X/G}\mid \rho_G(c)\in \cC^G\}$, where   
\[
\cC^G = \{c\in \cC \mid \forall g\in G\colon g.c = c\}
\]
is the code of \emph{$G$-invariants} used in~\cite{Breuckmann:balanced:2021}, implying that $\abs{\cC/G} = \abs{\cC^G}$ and $\dim{\cC/G} = \dim{\cC^G}$. In fact, another code from~\cite{Breuckmann:balanced:2021} called the code of \emph{$G$-coinvariants} 
\[
\cC_G = \{\sigma_G(c)  \in \F^{X/G} \mid c\in \cC \},
\]
where the operator $\sigma_G\colon c\mapsto \bigl(\sum_{x\in S} c(x)\bigr)_{S\in X/G}$ \emph{sums} the symbols in each orbit $S\in X/G$, is naturally related with the dual of the quotient $\cC/G$:
\begin{equation}\label{eq:coinv-code}
(\cC/G)^\perp = (\cC^\perp)_G,\quad (\cC_G)^\perp = \cC^\perp/G.
\end{equation}
Let us check the second equation $(\cC_G)^\perp = \cC^\perp/G$, since the first one immediately follows. Indeed, we have $a\in (\cC_G)^\perp$ \Iff $\forall b\in\cC\colon \dtp{a}{\sigma_G(b)} = \dtp{\rho_G(a)}{b} = 0$ \Iff $\rho_G(a)\in \cC^\perp$ \Iff $a\in \cC^\perp/G$,
where we used that
\[
\sum_{x\in X} a([x]_G)b(x)=\sum_{S\in X/G} a(S)\sum_{x\in S} b(x).
\]

Since $c\in \cC/G$ implies $\rho_G(c)\in \cC$, and we obtain $\abs{\rho_G(c)}\ge \abs{c}\min_{S\in X/G}\abs{S}$, then we have the following lower bound for the minimal distance 
\begin{equation}
d(\cC/G) \ge \frac{d(\cC)}{\min_{S\in X/G} \abs{S}}.     
\end{equation}

\begin{remark}
    In fact, we can also define the quotient $Q/G$ of a~quantum CSS code $Q=\cA/\cB$, where $\cA,\cB\le \F^{S}$ are both $G$-codes, if we put $Q/G = (\cA/G)/(\cB/G)$. 
\end{remark}

It is interesting that our quotient operation allows us to define the balanced product of two $G$-codes, which naturally corresponds to the  tensor product $\cA\otimes_R \cB$ of chain complexes $\cA,\cB$ over a~group ring $R = \F[G]$ used in \cite{Panteleev&Kalachev:2021}, \cite{Breuckmann:balanced:2021} to construct qLDPC codes. This general operation is often applied in group cohomology and is known to be related with the \emph{balanced product} of $G$-sets (e.g., see~\cite[Chapter~III]{Brown:1982}). In our definition below, we follow very closely the reference~\cite{Breuckmann:balanced:2021}, which considers the most general case of $\cA \otimes_R \cB$, but there are two important differences. First, the result of our operation is a~classical code\footnote{An~idea that we can also construct \emph{classical} codes using $\cA\otimes_R \cB$ was first proposed in~\cite{Panteleev&Kalachev:2021:ltc} to get good LTCs.}, not a~quantum code or chain complex, as it was in~\cite{Panteleev&Kalachev:2021}, \cite{Breuckmann:balanced:2021}. Second, our definition does not use parity-check matrices, and works for \emph{nonlinear} codes.

\begin{definition}[balanced product of codes]
    The \emph{balanced product} of $G$-codes $\cA\le \F^X$ and $\cB \le \F^Y$ is the code $\cA \otimes_G \cB\le \F^{X\times_G Y}$ defined as the quotient $\cA \otimes \cB/G$, where $G$ acts on $X\times Y$ as $g.(x,y) = (g.x, g.y)$. 
\end{definition}

This definition is correct since the diagonal map $(x,y)\mapsto (g.x, g.y)$ is the~composition $\Delta_R \circ \Delta_L$ of maps $\Delta_L\colon (x,y)\mapsto (g.x, y)$ and $\Delta_R\colon (x,y)\mapsto (x, g.y)$, and thus $\Delta_R(\Delta_L(\cA \otimes \cB)) \subseteq \Delta_R(\cA \otimes \cB) \subseteq \cA \otimes \cB$.

We can also define the quotients and the balanced products for \emph{sheaf $G$-codes}, i.e., sheaf codes $\cF$ on a~graded $G$-posets $X$ such that $c|_{X_\sigma}\in \cF_{\sigma}$ implies $(g.c)|_{X_{g.\sigma}} \in \cF_{g.\sigma}$ for $g\in G$. The last condition ensures that for every $G$-invariant $U\subseteq X$ we have that the code $\cF(U) = \{c\in \F^{X_U}\mid \forall \sigma\in U\colon c|_{X_\sigma}\in \cF_\sigma\} \le \F^{X_U}$ is a~$G$-code, and thus we can consider its quotient $\cF(U)/G$. This leads us to the following definition.
\begin{definition}[quotient sheaf code]
    For a~sheaf $G$-code $\cF$ on $X$ we define its \emph{quotient} $\cF/G$ as the sheaf code on $X/G$ with the global code $\cF/G(X/G) = \cF(X)/G$ and the local codes $(\cF/G)_S = \cF(S)/G$, $S\in X/G$. 
\end{definition}
The definition is correct since for $S\le T$ if we have $c\in \cF(S)/G$ then $c|_{(X/G)_T}\in \cF(T)/G$. In fact, one can check that we can also define this operation as the pushforward $\phi_* \cF$ (see Section~\ref{sc:oper-sheaf}), where $\phi\colon X\to X/G$ is a~natural \emph{projection} defined as $\phi\colon x\mapsto [x]_G$. 

Now, combining the operation of the product and the quotient for sheaf codes, we can also define their balanced product.
\begin{definition}[balanced product of sheaf codes]\label{df:balanced-prod-sheaf}
The \emph{balanced product} $\cA \otimes_G \cB$ of sheaf $G$-codes $\cA(X), \cB(Y)$ is the quotient sheaf code $\cA \otimes \cB/G$, where $G$ acts on $X\times Y$ as $g.(x,y) = (g.x, g.y)$.
\end{definition}
The coded space for $\cA \otimes_G \cB$ is the poset $X\times Y/G$, which leads us to the definition of balanced product for posets, an~instance of the~corresponding notion from topology if we view posets as Alexandrov topological spaces (see Appendix~\ref{sc:posets}).

\begin{definition}[balanced product of graded\footnote{In fact, this definition works with arbitrary posets of finite height. However, it is possible to give an example of posets $X$, $Y$ of \emph{infinite} height such that the quotient $X\times Y/G$ is only a~preordered set.} $G$-posets]\label{df:balanced-prod-poset}
    Given two graded $G$-posets $X$ and $Y$, their \emph{balanced product} is defined as the quotient $X\times Y/G$, where $G$ acts on $X\times Y$ as $g.(x,y) = (g.x, g.y)$. 
\end{definition}

If we apply this general operation to the special case when $X$ and $Y$ are both $G$-graphs~\cite{Breuckmann:balanced:2021}, then we can also define the \emph{balanced box product} of $G$-graphs $X\Box_G Y = \sk_{1} X\times_G Y$, which naturally generalizes the standard \emph{box product}\footnote{The box product is also known as the \emph{Cartesian product}.} of graphs $X\Box Y = \sk_1 X\times Y$.

\subsection{Free action}

Given a~$G$-set we say that the group $G$ acts \emph{freely} on $X$ or alternatively that the action of $G$ is \emph{free} if $g.x = x$ (or equivalently $x.g = x$) always implies $g=1_G$.  
A~$G$-set (resp., $G$-poset, $G$-graph, $G$-code) is called \emph{free} if the action of $G$ is free.
Free $G$-codes are known in coding theory as \emph{quasi-$G$} codes of \emph{index} $n=|X/G|$, and if $G$ is arbitrary (resp. abelian, cyclic) group, then the corresponding classes of codes are called \emph{quasi-group} (resp., \emph{quasi-abelian}, \emph{quasi-cyclic}) codes. It is common to consider the codewords of quasi-$G$ codes of index $n$ as elements from $R^n$, where $R = \F[G]$. Each such code $C$ can be represented either as\footnote{We represent codewords as column vectors, which is non-standard, but very convenient for our needs.} $\cC = \{c\in R^n \mid H_\lambda c = 0 \}$ or as $\cC = \{c\in R^n\mid c^\T H_\rho^\T = 0 \}$, where $H_\lambda$, $H_\rho$ are called its \emph{left} and \emph{right parity-check matrices}, respectively. Alternatively, we can represent them as $\cC = \{G^\T_\lambda u \mid u\in R^k \}$ or as $\cC = \{u^\T G_\rho \mid u\in R^k \}$, in which case we call $G_\lambda$ and $G_\rho$ the \emph{left} and the \emph{right} \emph{generator matrices}, respectively\footnote{We should warn the reader that though the left and the right versions of generator and parity-check matrices always exist for any quasi-$G$ code, they may be completely unrelated to each other and have even different sizes.}.  When $R=\F[G]$ is commutative, we do not need to distinguish the left and right versions, and can just put $H_\lambda = H_\rho = H$ and $G_\lambda = G_\rho = G$, representing the code as usual

\[
\cC = \{c\in R^n \mid Hc = 0\} = \{G^\T c \mid  c\in R^k\}. 
\]

Let us also note that since $(\cC/G)^\perp = (\cC^\perp)_G$, then if $H = (h_{ij})_{m\times n}$ is a~parity-check matrix (either left or right) of $\cC$, then $\eps(H) = (\eps(h_{ij}))_{m\times n}$ is the parity-check matrix of $\cC/G$, where $\eps\colon \F[G]\to \F$ is given by $\sum_{g\in G} a_g\cdot g\mapsto \sum_{g\in G} a_g$. Moreover, if $G$ is the cyclic group of order~$\ell$, then we have $\F[G]\cong \F[x]/(x^\ell-1)$, and $\eps(H)=H(1)$ is the parity-check matrix for $\cC/G$.

\section{Posets as Topological Spaces}\label{sc:posets}

Posets are very convenient for our purposes since they can naturally represent in a unified way~(e.g., see~\cite{Panteleev&Kalachev:2021:ltc}) various combinatorial objects like graphs, hypergraphs, simplicial, and cubical complexes, which often appear in the context of error correcting codes. Viewing them as topological spaces with Alexandrov topology~\cite{barmakAlgebraicTopologyFinite2011, margolisCellComplexesPoset2021} allows one to translate the standard topological definitions and constructions (e.g., continuous maps, homotopy, direct products, coverings, fiber bundles, balanced products, etc.) into the purely combinatorial language of posets.

For example, one can check that the continuous maps $f\colon X\to Y$ between posets $X$ and $Y$ (considered as topological spaces) are exactly the \emph{monotone maps}, i.e., $a\le b$ implies $f(a) \le f(b)$, which we also call \emph{morphisms} or \emph{poset maps}. Another example is the~\emph{direct product} of posets $X$ and $Y$ defined as the set $X \times Y$ with the partial order $(a,b) \le (a',b')$ \Iff $a\le a'$ and $b \le b'$, which corresponds to their direct product as topological spaces. One can also easily check that $G$-posets can be viewed as topological spaces with a~continuous action of $G$ (considered as a~discrete topological group). Thus, we can naturally apply the general idea of balanced product of topological spaces to $G$-posets as well (Definition~\ref{df:balanced-prod-poset}).

\subsection{Poset Codes}

In fact, it is even possible to assign to each poset $X$ a~simplicial complex called its \emph{order complex} $\Delta(X)$ with the set of vertices $X$ and all possible chains in $X$ as faces. A natural geometrical realization of $\Delta(X)$ in $\R^n$ can be viewed as a~geometrical realization of $X$, since the topological spaces $X$ and $\Delta(X)$ have essentially\footnote{In fact, they are \emph{weakly homotopy equivalent}, implying that their fundamental group and homology groups are the same.} the same properties. Moreover, if $X$ is the cell poset of some cell complex (e.g., simplicial or cubical), than $\Delta(X)$ is just its first barycentric subdivision. Since many existing constructions of LTCs and qLDPC codes can be obtained as (co)homology groups of complexes, this raises a~natural question:

\begin{problem}
Can we use (co)homology groups of $\Delta(X)$ to construct classical LTCs and quantum LDPC codes with good parameters?     
\end{problem}

Since every poset $X$ is defined by its Hasse diagram, which is just a~multipartite graph, potentially we have a~very general class of codes obtained from random sparse graphs.

\end{document}